\documentclass[11pt]{article}

\usepackage{amssymb,amsthm,amsmath}
\usepackage{xcolor}
\usepackage{fullpage}
\usepackage{graphicx}

\graphicspath{{figures/}}

\newcommand{\wF}{\mathsf{wF}}
\newtheorem{thm}{Theorem}
\newtheorem{cor}[thm]{Corollary}

\newtheorem{lemma}[thm]{Lemma}
\newtheorem{obs}[thm]{Observation}

\newtheorem{remark}[thm]{Remark}

\newcommand{\Z}{\mathbb{Z}}
\newcommand{\R}{\mathbb{R}}

\newcommand{\dtent}{d^{\mathrm{tent}}}
\newcommand{\opt}{{\mathrm{opt}}}
\newcommand{\cost}{\mathrm{cost}}
\newcommand{\free}{\mathrm{free}}
\newcommand{\dext}{d^{\mathrm{ext}}}
\newcommand{\dfinal}{d^{\mathrm{final}}}

\newcommand{\bfS}{{\mathbf{S}}}
\newcommand{\poly}{{\mathrm{poly}}}
\newcommand{\Jbig}{J_{\mathrm{big}}}

\allowdisplaybreaks

\begin{document}
	
\title{A Polynomial Time Constant Approximation For Minimizing Total Weighted Flow-time}
\author{
Uriel Feige\thanks{Weizmann Institute, Rehovot, Israel. Supported in part by the Israel Science Foundation
(grant No. 1388/16). Part of this work was done while the author was visiting Microsoft Research, Redmond.}\\
uriel.feige@weizmann.ac.il
\and
Janardhan Kulkarni\thanks{Microsoft Research, Redmond}\\
jakul@microsoft.com
\and
Shi Li\thanks{University at Buffalo, Buffalo, NY, USA. The work is in part supported by NSF grants CCF-1566356 and CCF-1717134.}\\
shil@buffalo.edu
}
\date{}

\maketitle

\begin{abstract}
We consider the classic scheduling problem of minimizing the total weighted
flow-time on a single machine (min-WPFT), when preemption is allowed. In this problem, we are given a set of $n$ jobs, each job having a release time $r_j$, a processing time $p_j$, and a weight $w_j$.
The flow-time of a job is defined as the amount of time the job spends
in the system before it completes; that is, $F_j = C_j - r_j$, where $C_j$ is the completion time of
job. The objective is to minimize the total weighted flow-time of jobs.

This NP-hard problem has been studied quite extensively for decades.
In a recent breakthrough, Batra, Garg, and Kumar \cite{Batra18} presented a {\em pseudo-polynomial} time algorithm that has an $O(1)$ approximation ratio. The design of a truly polynomial time
algorithm, however, remained an open problem. In this paper, we show a transformation from pseudo-polynomial time algorithms to polynomial time algorithms in the context of min-WPFT. Our result combined with
the result of Batra, Garg, and Kumar \cite{Batra18} settles the long standing conjecture that there is a
polynomial time algorithm with $O(1)$-approximation for min-WPFT.
	
\end{abstract}

\section{Introduction}
\label{intro}
One of the most basic problems studied extensively in scheduling theory is the problem of minimizing the total weighted flow-time on a single machine (min-WPFT).  
In this problem, we are given a set $J$ of $n$ jobs, each job having a release time $r_j$, a processing time $p_j$ (also sometimes referred to as size, or length), and a weight $w_j$.
The flow-time of a job, denoted by $F_j$, is defined as the amount of time the job spends in the system before it completes.
Formally, $F_j = C_j - r_j$, where $C_j$ is the completion time of job $j$.
The objective is to find a {\em preemptive schedule} that  minimizes the total weighted flow-time: $\sum_j w_j F_j$.
If preemption is not allowed, then the problem cannot be approximated better than $\Omega(n^{1/2-\epsilon})$ for any $\epsilon > 0$, even for the unweighted case \cite{KellererTW99}.
Hence, preemption is a standard assumption in the study of flow-time objective functions.
When the weight of all jobs is the same, then the Shortest Remaining Processing Time (SRPT) -- which at any time step $t$ schedules the job with the least remaining processing time --  is an optimal algorithm. However, when jobs have different weights the problem becomes difficult.
The problem is known to be NP-hard, which is the only known lower bound on the problem, and no constant factor approximation algorithm is known for the problem.
Obtaining a polynomial time constant factor approximation algorithm for min-WPFT has been listed as a top ten open problem in the influential survey of Schuurman and Woeginger \cite{petra}, and also recently by Bansal \cite{Bansal17}.
In this paper, building on the recent breakthrough work of Batra, Garg, and Kumar\cite{Batra18}, we give a polynomial time constant factor approximation algorithm to the problem.

For the purpose of stating earlier results, let us introduce some notation. We use $P := \sum_{j \in J}p_j$ to denote the total size of all jobs and $\hat{P} := \frac{\max p_j}{\min p_j}$ to denote the ratio between maximum and minimum size (also referred to as spread).  Likewise, $W := \sum_{j \in J} w_j$ denotes the total weight of all jobs, and $\hat{W} := \frac{\max w_j}{\min w_j}$ is the spread in weights. Approximation ratios and running times of approximation algorithms are typically expressed as functions of $n$, $\hat{P}$ and $\hat{W}$. As shown in~\cite{ChekuriKhanna02} (see Section~\ref{sec:prelim} for more details), one may assume (for the purpose of approximation algorithms) that $|\log \hat{W} - \log \hat{P}| = O(\log n)$. Hence when $\min[\hat{W},\hat{P}] \ge n$ (which is the case of interest in this paper), in all results cited below one can interchange between $\hat{W}$ and $\hat{P}$ without affecting the validity of the bounds.

Chekuri et al.\cite{CKZ01} designed an approximation algorithm for min-WPFT with approximation factor $O(\log^2 \hat{P})$. Their algorithm is semi-online (requires knowledge of $\hat{P}$ in advance). 
Bansal and Dhamdhere \cite{Bansal03} obtained an $O(\log \hat{W})$ approximation, using an online algorithm. Bansal and Chan~\cite{BC09} showed that no deterministic online algorithm can have a constant approximation ratio. Chekuri and Khanna \cite{ChekuriKhanna02} designed a $(1+\epsilon)$-approximation (offline) algorithm with running time $O\left(n^{O\left(\frac{\log \hat{W} \log \hat{P}}{\epsilon^3}\right)}\right)$.
Substantial progress towards getting a polynomial time constant approximation algorithm was made by Bansal and Pruhs \cite{BansalPruhs14}, who gave a very elegant $O(\log \log \hat{P})$ approximation to the problem. Their main insight was to reduce min-WPFT to a geometric set-cover problem, and argue that the geometry of the resulting objects leads to  $O(\log \log \hat{P})$ approximation to this set-cover problem.
A further advantage of the geometric approach is that the results extend to general cost functions, 
such as $\ell_p$-norms of flow-time.

In a recent breakthrough, Batra, Garg, and Kumar \cite{Batra18} gave a pseudo-polynomial time $O(1)$-approximation to min-WPFT. Their idea was to show that the problem can be reduced to a generalization of the multi-cut problem on trees called {\em Demand Multi-cut} problem. They argue that instances of the problem produced by min-WPFT have nice structural properties that can be exploited using a dynamic programming approach to obtain an $O(1)$-approximation algorithm.

The algorithm of \cite{Batra18} runs in time polynomial in $n$ and in $\hat{P}$, which is polynomial in $n$ only when $\hat{P}$ is bounded by a polynomial in $n$. They posed the problem of obtaining truly polynomial time algorithms (polynomial in $n$ even when $\hat{P}$ is exponential in $n$) as an open problem. In this paper, we show that one can use their result as a subroutine to obtain an $O(1)$-approximation algorithm to the problem. In particular we show the following result.

\begin{thm}
	\label{thm:final} For the problem of minimizing weighted flow-time on a single machine (even when jobs have exponential weights and processing lengths) there exist: 
	\begin{itemize}
		\item a polynomial time algorithm with $O(1)$-approximation factor, and
		\item a $(1+\epsilon)$-approximation algorithm, for any $\epsilon > 0$, which runs in time $n^{O\big(\frac{\log^2 n}{\epsilon^5}\big)}$.
	\end{itemize}
\end{thm}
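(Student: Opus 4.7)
The plan is to use Batra, Garg, and Kumar \cite{Batra18} (henceforth BGK) as a black box on instances with polynomially-bounded spread, and to reduce arbitrary instances to such instances with only $O(1)$ loss. By the Chekuri-Khanna preprocessing noted in the excerpt, we may assume $|\log\hat{W}-\log\hat{P}|=O(\log n)$, so it suffices to control $\hat{P}$.

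I would group jobs into size classes $J_k=\{j:p_j\in[2^k,2^{k+1})\}$. Even when the spread is exponential, at most $n$ of these classes are non-empty. Using this, I would select a polynomial-sized set $\Jbig$ of ``big'' jobs — jobs that are much longer than most of the work released in their vicinity — with the property that, after committing to some placement of just these big jobs, the time axis is partitioned into slabs in each of which only jobs from a polynomial range of size-classes remain relevant. On each slab, $\hat{P}=\poly(n)$, so I can invoke BGK to obtain a tentative schedule $\dtent$; I would then form an extended schedule $\dext$ by weaving the big jobs back in, and a final schedule $\dfinal$ after reconciling overlaps and free slots (this is what the macros $\free$, $\cost$, $\calS$, and $\bfS$ appear prepared for).

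The main technical obstacle is the lifting step: one must show that $\sum_{\text{slabs}}\opt(\text{sub-instance})=O(\opt(I))$ so that per-slab $O(1)$-approximations aggregate to an $O(1)$-approximation overall. The difficulty is that in OPT a single job's flow-time can straddle many slabs, so merely restricting OPT to a slab can either hide or double-count the job's cost. I expect the fix to require a careful definition of slabs and sub-instances — probably using a geometric/interval structure on release times together with the big-job occupations — so that each job is charged to a small, well-defined set of slabs and so that OPT's local contribution dominates the sub-instance cost up to a constant. Choosing $\Jbig$ carefully enough to keep its cardinality polynomial while still inducing nice slab structure is the crux of the argument.

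For the $(1+\epsilon)$ bullet, once the reduction produces an instance with $\log\hat{P}=O(\log n)$, the Chekuri-Khanna $(1+\epsilon)$-approximation runs in time $n^{O(\log\hat{W}\log\hat{P}/\epsilon^3)} = n^{O(\log^2 n/\epsilon^3)}$. The two extra powers of $\epsilon^{-1}$ in the claimed bound $n^{O(\log^2 n/\epsilon^5)}$ should come from refining the reduction itself to be $(1+\epsilon)$-preserving (rather than $O(1)$-preserving), which typically incurs additional polynomial dependence on $\epsilon^{-1}$ in the slab parameters fed to the inner algorithm.
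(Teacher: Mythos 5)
Your proposal leaves the central step unresolved, and the route you sketch diverges from the one that actually works. The paper does not partition the \emph{time axis} into slabs; it partitions the \emph{jobs} into size classes $J_k$ (sizes in $[n^{3(k-1)},n^{3k})$) and forms overlapping sub-instances $\pi_k=J_{k-1}\cup J_k$, each of spread at most $n^6$. With that choice, the aggregation you flag as the main obstacle is in fact immediate: restricting the optimal schedule to the jobs of $\pi_k$ is feasible for $\pi_k$, so $\sum_k \wF(S_k)\le \rho\sum_k(\opt_{k-1}+\opt_k)\le 2\rho\cdot\opt$; no job ``straddles'' anything, because sub-instances are sets of jobs, not time windows. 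The genuinely hard part, which your sketch waves at with ``weaving the big jobs back in and reconciling overlaps,'' is combining the schedules $S_2,S_3,\dots$ into one feasible single-machine schedule without inflating flow-times. The paper does this inductively: jobs of $J_{k-1}\cup J_k$ receive tentative deadlines $\max\{C_j(\bfS_{k-1}),C_j(S_k)\}$, these in general violate the EDF feasibility condition of Theorem~\ref{thm:edf}, and the crux (Theorem~\ref{thm:main}) is that deadlines can be extended at extra weighted cost $O(1)\cdot\sum_{j\in J_{k-1}\cup J_k}w_jp_j$, which is affordable precisely because $\sum_j w_jp_j\le\opt$ (every job has $F_j\ge p_j$). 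Proving this requires the reduction to a geometric set cover by axis-abutting rectangles, an explicit fractional cover of cost $O(\sum_j w_jp_j)$ (exploiting that a dangerous interval has length at least $n^{3k-3}$ while the occupied slots total $Q\le n^{3k-5}$), and rounding via linear union complexity and quasi-uniform sampling. None of this machinery, nor any substitute for it, appears in your proposal; moreover, your preliminary claim---that after committing a polynomial-size set of big jobs the timeline splits into slabs each of which sees only a polynomial range of size classes---is unsupported and false in general (one enormous job can be alive throughout the horizon alongside tiny jobs, so no time decomposition localizes the size classes).

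The $(1+\epsilon)$ bullet also cannot be obtained as you describe: the reduction does not output a single instance with $\log\hat P=O(\log n)$ to which Chekuri--Khanna is applied once; the merging above loses a multiplicative $2$ plus an additive $O(\opt)$, which is far from $(1+\epsilon)$-preserving. The paper instead uses blocks of $b+1=\Theta(1/\epsilon)$ consecutive classes, so each sub-instance has spread $n^{O(1/\epsilon)}$ and Chekuri--Khanna runs in time $n^{O(\epsilon^{-3}\log^2 n^{O(1/\epsilon)})}=n^{O(\epsilon^{-5}\log^2 n)}$ (this is where the two extra powers of $1/\epsilon$ come from); it strengthens the deadline-extension bound so that classes above the lowest one in a block pay only $O(1/\sqrt n)\sum w_jp_j$ (Theorem~\ref{thm:qptasmain}); and it averages over $b$ phase-shifted final schedules so that each class incurs the $O(1)$ overhead in only one of them, making the best of the $b$ schedules a $(1+\epsilon)$-approximation. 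Your guess about the source of $\epsilon^{-5}$ is directionally right, but the shift-and-average argument and the refined extension bound are essential and absent.
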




\subsection{Our Techniques}
For many optimization problems one can assume that input integers are bounded by a polynomial, sometimes without loss of generality, and sometimes with only negligible loss in the approximation ratio via simple reductions.  However, it was not known whether such an assumption can be made for the min-WPFT problem. Indeed, our main contribution is that we answer the question in the affirmative, via a non-trivial reduction that uses the geometric aspect of the min-WPFT problem.


In our algorithm, we partition jobs into classes, where each class $J_k$ contains jobs with size in $[n^{3(k-1)}, n^{3k})$. For every $k = 2, 3, \cdots$, we define a min-WPFT instance which contains jobs $J_{k-1} \cup J_k$.   The \emph{spread} of each such instance, which is defined as the ratio between the maximum and minimum job size, is at most $n^6$. Thus we can use the algorithm of Batra et al.\ \cite{Batra18} to obtain $O(1)$-approximate solutions $S_2, S_3, \cdots $ for these instances.  It is easy to see that the total cost of these schedules is at most $O(1)$ times the cost of the optimum schedule for $J$.  We build the final schedule $\bfS$ for $J$ in an inductive manner, using the schedules $S_2, S_3, \cdots$. Start from $\bfS_2 = S_2$. For every $k = 3, 4, \cdots$, we construct a schedule ${\bfS}_k$ for $J_1 \cup J_2 \cup \cdots \cup J_k$, using the two schedules ${\bfS}_{k-1}$ and $S_k$. Our final schedule is $\bfS= \bfS_K$, where $K$ is the index of the last job class.

The crux of our algorithm is the construction of $\bfS_k$ from $\bfS_{k-1}$ and $S_k$.  Jobs in $J_1 \cup \cdots \cup J_{k-2}$ are scheduled in $\bfS_k$ in exactly the same way as they were in $\bfS_{k-1}$.  Then our algorithm inserts $J_{k-1} \cup J_k$ into $\bfS_k$. To obtain a schedule with small cost, we define a tentative deadline $\dtent_j$ for every $j \in J_{k-1} \cup J_k$: this is the maximum completion time of $j$ in the two schedules $\bfS_{k-1}$ and $S_k$ (jobs in $J_k$ are not contained in $\bfS_{k-1}$ and thus their tentative deadlines are their completion times in $S_k$).  If we could show that all jobs in $J_{k-1} \cup J_k$ can be inserted into $\bfS_k$ so that all these jobs complete by their respective tentative deadlines, then we will be done.

However in general this goal can not be achieved. Indeed, we need to \emph{extend} the deadlines of jobs in $J_{k-1} \cup J_k$ so that they can be completed by their extended deadlines.  In order to bound the cost of the final schedule, we show that the cost incurred by extending  deadlines is small.  This is done via a reduction of the problem to a geometric set cover problem, using the framework of Bansal and Pruhs \cite{BansalPruhs14}. We show that there is a {\em fractional solution} of small cost to the set cover instance, and the \emph{union complexity} of the system of geometric objects in the instance is linear.  Applying the algorithm of Bansal and Pruhs \cite{BansalPruhs12}, which builds on the results of Chan et al.\cite{CGKS12} and Varadarajan~\cite{Varadarajan10}, leads to an $O(1)$-approximation for the geometric set cover problem. This gives a way to extend the deadlines of jobs in $J_{k-1} \cup J_k$ with small cost.

Using the above technique, we shall lose a multiplicative factor of $2$ and an $O(1)$-additive factor in the approximation ratio. This is sufficient for achieving an $O(1)$-approximation for min-WFPT.  In order to obtain a QPTAS by combining our reduction with the algorithm of Chekuri and Khanna \cite{ChekuriKhanna02}, we can only afford to lose a $(1+\epsilon)$-multiplicative factor in the reduction.  This we do by considering instances with $O(1/\epsilon)$ consecutive classes and more careful analysis of the geometric set cover instances.

\section{Preliminaries}
\label{sec:prelim}


We assume, without loss of generality, that arrival times are non-negative integers, and that processing times and weights are positive integers.  Let $P$ denote the sum of processing times of all jobs, and $W$ denote the sum of their weights.  For simplicity of the presentation, we assume that $P = 2^{O(n)}$ and $W = 2^{O(n)}$, so the input instance has representation size that is polynomial in $n$, yet previous constant factor approximation algorithms do not run in time polynomial in $n$. (More generally, the running time of our algorithm is polynomial in the number of bits used in order to encode the input instance, when processing times and weights are encoded in binary.) 
For $P$ and $W$ as above, any {\em reasonable} output schedule has a representation that is polynomial in $n$. A schedule is regarded as {\em reasonable} if the machine is idle only when there are no jobs to be processed, and two jobs do not each preempt the other. A reasonable schedule involves only a linear number of significant time steps. For each job, one need only specify the time step in which it began being processed (possibly preempting a different job), and the time step in which it completed (possibly allowing a different job to begin or resume). The job might be preempted and resumed multiple times during the process, but these events co-occur with release times and completion times of other jobs.

Indeed, in the schedule $\bfS$ constructed by our algorithm,  for every job $j$ we only specify its completion time  (or its deadline) in $\bfS$. If these deadlines are {\em feasible}, in the sense formalized below, then scheduling jobs in the Earliest Deadline First (EDF) order gives a valid schedule; that is, every job completes by its deadline. Hence, our algorithm needs to ensure that deadlines are feasible. The following theorem characterizes the feasibility of a EDF schedule.

\paragraph{Optimality of EDF:}
Consider a set of jobs $J$, where each job $j$ has a release time $r_j$ and a deadline $d_j$. For any time interval $I:= (t_1, t_2]$, let $J(I)$ denote the set of jobs that are contained in $I$; that is, $J(I) := \{j \in J: (r_j, d_j] \in (t_1, t_2]\}$. Then,

\begin{thm}
\label{thm:edf}
Scheduling jobs using the Earliest Deadline First algorithm completes every job $j \in J$ before its deadline $d_j$ if and only if for every interval $(r_j, d_{j'}]$, where $r_j$ is the release time of some job $j$ and $d_{j'}$ is the deadline of some (possibly same) job $j'$, we have
$$
\sum_{j'' \in J(I)} p_{j''} \leq d_{j'} - r_j.
$$
\end{thm}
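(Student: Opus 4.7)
\textbf{Proof plan for Theorem \ref{thm:edf}.}

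The plan is to prove the two directions separately, with the sufficiency direction carrying essentially all of the work.

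For necessity (if EDF completes every job on time, then the inequality holds), I would fix any interval $I = (r_j, d_{j'}]$ of the stated form. Every job $j'' \in J(I)$ satisfies $r_{j''} \ge r_j$ and $d_{j''} \le d_{j'}$, so in any valid schedule its processing must occur entirely inside $I$. Summing over $J(I)$, the total processing time fits inside the interval, giving $\sum_{j'' \in J(I)} p_{j''} \le d_{j'} - r_j$. (This argument actually applies to any valid schedule, not just EDF.)

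For sufficiency (if the inequality holds for all such intervals, then EDF is a valid schedule), I would argue by contradiction. Suppose EDF misses the deadline of some job $j^*$ with $d^* := d_{j^*}$. Define $t_1$ to be the latest time strictly less than $d^*$ at which the EDF machine is either idle or processing a job with deadline greater than $d^*$; if there is no such time, set $t_1 = 0$. I would then establish two structural facts about the interval $I := (t_1, d^*]$: (a) throughout $I$, EDF is busy and processes only jobs with deadline $\le d^*$ (immediate from the maximality of $t_1$); and (b) every job that EDF processes during $I$ has release time at least $t_1$. For (b), if some job $j$ with $r_j < t_1$ and $d_j \le d^*$ were processed during $I$, then $j$ would be available and unfinished at time $t_1$, so EDF would have preferred it over idling or over any deadline-$>d^*$ job, contradicting the definition of $t_1$.

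Next I would argue that $t_1$ is itself the release time of some job $j_0$ with $d_{j_0} \le d^*$, so that $I = (r_{j_0}, d_{j^*}]$ is of the form required by the theorem's hypothesis. Indeed, at $t_1^+$ EDF begins processing a deadline-$\le d^*$ job; for this switch to happen (rather than the job being processed just before $t_1$ as well), some such job must have been released exactly at $t_1$ (with the edge case $t_1 = 0$ handled by taking $j_0$ to be any job released at time $0$ that EDF starts on). Putting (a) and (b) together, every job processed during $I$ belongs to $J(I)$, and in particular $j^* \in J(I)$ since $j^*$ is processed in $I$ (its deadline is $d^*$, its remaining work is positive at $d^*$, and by (b) $r_{j^*} \ge t_1$). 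The machine performs exactly $d^* - t_1$ units of work in $I$, all on $J(I)$ jobs, yet $j^*$ still has strictly positive remaining processing at time $d^*$, so
\[
\sum_{j \in J(I)} p_j \;\ge\; (d^* - t_1) \;+\; \bigl(p_{j^*} - \text{amount of } j^* \text{ processed in } I\bigr) \;>\; d^* - t_1,
\]
contradicting the assumed inequality for the interval $(r_{j_0}, d_{j^*}]$.

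The main obstacle I expect is the bookkeeping around the extremal choice of $t_1$: pinning down that it coincides with a release time of some job with deadline at most $d^*$ (so the hypothesis of the theorem applies to the interval $(t_1, d^*]$), handling the boundary case $t_1 = 0$ cleanly, and verifying that $j^* \in J(I)$ even though $j^*$'s processing extends beyond $d^*$. Once these points are pinned down, the busy-interval counting argument closes the contradiction immediately.
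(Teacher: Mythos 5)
Your proof is correct, but it takes a genuinely different route from the paper. The paper treats necessity exactly as you do (the jobs of $J(I)$ must be processed entirely inside $I$), but for sufficiency it does not give a direct argument at all: it invokes a bipartite matching (Hall's condition) argument and refers to \cite{BansalPruhs14}, the same style of reasoning that reappears later in Lemma~\ref{lem:matching}, where jobs are matched to available unit slots. That matching viewpoint first yields existence of \emph{some} feasible schedule under the load condition and has the advantage of generalizing to the setting the paper actually needs, namely non-contiguous free slots left over by already-scheduled lower-class jobs. Your argument is the classical EDF busy-interval proof: it is self-contained, elementary, and directly establishes that EDF itself meets all deadlines, which is arguably a cleaner proof of the theorem as stated. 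One small repair: your parenthetical claim that $j^*$ is processed in $I=(t_1,d^*]$ need not hold (EDF may spend all of $I$ on jobs with earlier deadlines and never touch $j^*$ there), so you should not obtain $r_{j^*}\ge t_1$ via fact (b). Instead, derive it directly from the definition of $t_1$: if $r_{j^*}<t_1$, then at time $t_1$ the job $j^*$ is released and unfinished (it is still unfinished at $d^*$), so EDF would not idle nor run a job with deadline greater than $d^*$, a contradiction. With that one-line substitution the counting argument closes exactly as you wrote it, and the edge case $t_1=0$ is handled as you indicate since the job run immediately after time $t_1$ must be released exactly at $t_1$.
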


Note that necessity of the above condition is straightforward: The total processing lengths of jobs that need to be scheduled entirely in the interval $I$ cannot be more than the length of the interval itself. The sufficiency of the condition follows from a bipartite matching argument, and we refer the readers to \cite{BansalPruhs14} for the proof.

%
\paragraph{Spreads of Instances} As shown in~\cite{ChekuriKhanna02}, if at least one of $W$ or $P$ is polynomially bounded, then one can assume that so is the other, up to a negligible loss in the approximation ratio.  In fact, the same applies to the {\em spreads} $\hat{P} := \frac{\max p_j}{\min p_j}$ and $\hat{W} := \frac{\max w_j}{\min w_j}$. Assume $\hat{P}$ is polynomially bounded. One can initially ignore (and later schedule at arbitrarily available time slots) all jobs that have weight smaller than $\frac{\epsilon}{n^2\hat{P}}\max w_j$, with a multiplicative loss of at most a $(1 + \epsilon)$ in the flow cost. Similarly, if $\hat{W}$ is polynomially bounded, then one can initially ignore all jobs that have size smaller than $\frac{\epsilon}{n^2\hat{W}}\max p_j$. Thereafter, the ignored jobs can be inserted into the schedule, making room for them by delaying (preempting) the jobs that are already scheduled. This delay adds only an $\epsilon$ fraction to the flow cost, because the ignored jobs have very small size.

Given the above, we may assume (with only negligible loss in the approximation ratio) that, in any min-WPFT instance, $\hat{W}$ is at most $\hat{P}\cdot \poly(n)$. So we define the \emph{spread of an instance} to be $\hat{P}$. Then, approximation ratios and running times of known algorithms for the scheduling problem can be expressed as functions of $n$ and $\hat{P}$. 
In particular, for every $\epsilon > 0$ one can achieve a $2 + \epsilon$ approximation in time $n^{O(\epsilon^{-2} \log^2 \hat{P})}$~\cite{CKZ01} and a $1 + \epsilon$ approximation in time $n^{O(\epsilon^{-3} \log^2 \hat{P})}$~\cite{ChekuriKhanna02}. These running times are quasi-polynomial when $\hat{P}$ is polynomial, but exponential if $\hat{P}$ is exponential. The result of Batra, Garg and Kumar \cite{Batra18} gives an $O(1)$-approximation in pseudo-polynomial time, i.e, time polynomial in $n$ and $\hat P$.  The best approximation ratio known to be achievable in polynomial time was $O(\log\log \hat{P})$~\cite{BansalPruhs14}, which is $O(\log n)$ when $\hat{P}$ is exponential.

\paragraph{Notations} In the rest of the paper, for a schedule $S$ (that possibly contains only a subset of jobs in $J$), and a job $j \in S$, we use the notation $C_j(S)$ and $F_j(S) = C_j(S) - r_j$ to respectively denote the completion and flow-times of job $j$ in the schedule $S$. Let $\wF(S) = \sum_{j \in S} w_j F_j(S)$ be the weighted flow-time of jobs scheduled in $S$.  For any subset $J' \subseteq J$ of jobs, we use $p(J') := \sum_{j \in J'} p_j$  to denote the total size of jobs in $J'$.

\section{Our Algorithm}
\label{sec:ouralg}
In this section, we prove our main theorem that shows one can w.l.o.g assume the spread $\hat P$ is polynomially related to the input size $n$, sacrificing only an $O(1)$-factor in the approximation ratio (we show that the loss can be decreased to $1+\epsilon$ in Section~\ref{sec:quasi-PTAS}). Our main theorem is the following:
\begin{thm}
	\label{thm:spread}
	There is a constant $c \ge 1$, such that for every monotone functions $\rho, f:\Z_{\geq 1} \times \R_{\geq 1} \to \R_{\geq 1}$, the following holds. Given an algorithm ALG that solves instances of min-WPFT with $n$ jobs and spread ratio $\hat{P}$ in time $f(n, \hat P)$ and with approximation ratio $\rho(n, \hat P)$, one can solve instances of min-WPFT in time $O\left(nf(n, n^{6}) + n^{O(1)}\right)$ and with approximation ratio $2\rho(n, n^{6}) + c$.
\end{thm}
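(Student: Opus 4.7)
My plan follows the outline in Section~1.1 and has three stages: partition jobs by size and run ALG on consecutive pairs of classes; stitch the resulting schedules together inductively using tentative deadlines; and repair any infeasibility by extending deadlines via a geometric set cover argument. For the first stage, I would partition $J$ into classes $J_k := \{j \in J : p_j \in [n^{3(k-1)}, n^{3k})\}$, yielding $K = O(n)$ classes since $P = 2^{O(n)}$. For every $k \ge 2$, the sub-instance $J_{k-1} \cup J_k$ has spread at most $n^6$, so running ALG produces a schedule $S_k$ with $\wF(S_k) \le \rho(n,n^6) \cdot \wF(\opt_k)$, where $\opt_k$ is the optimum on $J_{k-1} \cup J_k$. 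Since the global optimum restricted to $J_{k-1} \cup J_k$ yields a valid schedule of cost at most its contribution to $\wF(\opt)$, and every job lies in at most two consecutive windows, summing gives $\sum_{k=2}^K \wF(S_k) \le 2\rho(n,n^6)\cdot \wF(\opt)$.

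Next, I would build the final schedule $\bfS = \bfS_K$ by induction: starting from $\bfS_2 = S_2$, at stage $k$ I construct $\bfS_k$ so that jobs in $J_1 \cup \cdots \cup J_{k-2}$ keep their positions from $\bfS_{k-1}$, while jobs in $J_{k-1} \cup J_k$ are (re)scheduled by EDF against \emph{tentative deadlines} $\dtent_j := \max(C_j(\bfS_{k-1}), C_j(S_k))$ for $j \in J_{k-1}$ and $\dtent_j := C_j(S_k)$ for $j \in J_k$. The crucial size-gap observation is that the total processing time of all jobs in $J_1 \cup \cdots \cup J_{k-2}$ is at most $n \cdot n^{3(k-2)} = n^{3k-5}$, a factor of $n^{-2}$ below a single job in $J_{k-1}$, so these frozen tiny jobs act as a negligible perturbation in any feasibility inequality comparing the current pair of classes.

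In general the tentative deadlines do not satisfy the EDF feasibility condition of Theorem~\ref{thm:edf}, so I would \emph{extend} them to $\dext_j \ge \dtent_j$ that do, while keeping the extension cost $\sum_j w_j(\dext_j - \dtent_j)$ small. Following Bansal--Pruhs~\cite{BansalPruhs14}, this is phrased as a geometric set cover problem whose elements are infeasible critical intervals and whose covering objects correspond to pushing individual $\dext_j$ out to various candidate future times. A cheap fractional cover is available because the (unknown) optimum schedule $\opt$ is itself feasible: taking $\dext_j := \max(\dtent_j, C_j(\opt))$ witnesses a fractional extension cost of at most $\wF(\opt)$. Combined with linear union complexity of the induced set system, the Bansal--Pruhs algorithm~\cite{BansalPruhs12} (building on Chan et al.~\cite{CGKS12} and Varadarajan~\cite{Varadarajan10}) returns an integral extension of cost $O(1) \cdot \wF(\opt)$.

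Putting everything together, the final schedule has cost at most $2\rho(n,n^6)\cdot\wF(\opt)$ from the tentative-deadline contributions plus $c \cdot \wF(\opt)$ from the extensions, giving the claimed ratio $2\rho(n,n^6)+c$; the running time is $O(n f(n,n^6))$ from the $K-1 = O(n)$ invocations of ALG plus $n^{O(1)}$ for set cover and EDF scheduling. The main obstacle I anticipate is making the geometric set cover step rigorous at each stage: precisely defining the covering objects so that (i) the $\opt$-based fractional bound genuinely applies with only $O(1)$ blow-up across the $K$ stages---which requires handling the fact that at stage $k$ the tentative deadlines themselves depend recursively on $\bfS_{k-1}$---and (ii) the induced set system provably has linear union complexity. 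The factor-$n^3$ size gap between consecutive classes, and the feasibility slack it buys against the frozen earlier jobs, is the structural reason both (i) and (ii) should work out.
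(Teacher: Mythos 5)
Your outer skeleton matches the paper's proof exactly: classes of width $n^3$, running ALG on consecutive pairs $J_{k-1}\cup J_k$, stitching inductively with tentative deadlines $\dtent_j=\max\{C_j(\bfS_{k-1}),C_j(S_k)\}$, repairing infeasibility by extending deadlines via a geometric set cover with linear union complexity, and the $2\rho+c$ accounting (each class charged twice for the $\rho$-part, extensions absorbed into the additive constant via $\sum_j w_jp_j\le\opt$). The genuine gap is in the one step that carries all the technical weight: the existence of a cheap \emph{feasible} fractional cover. Your proposed witness $\dext_j:=\max(\dtent_j,C_j(\opt))$ does not establish coverage. Safety of an interval $I=(t_1,t_2]$ must be measured against $\free(I)$, the slots left after the frozen jobs of $J_1\cup\cdots\cup J_{k-2}$ are pinned to their $\bfS_{k-1}$ positions; but $\opt$ may schedule those jobs entirely outside $I$, so the jobs that remain contained in $I$ under your extended deadlines are only guaranteed to have total size at most $t_2-t_1$, which can exceed $\free(I)$ by up to $Q=p(J_1\cup\cdots\cup J_{k-2})$. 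In particular there can be dangerous intervals in which your witness pushes out \emph{no} job at all, so it is not a valid (fractional or integral) cover, and no rounding theorem can rescue it. Relatedly, ``covering'' an interval only restores safety if the job pushed beyond $t_2$ has $p_j\ge Q$; your witness may only move small jobs.

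The paper's proof does not use $\opt$ as the witness at all. It restricts the sets to big jobs $\Jbig=\{j: p_j\ge Q\}$ (note $J_k\subseteq\Jbig$ since $Q\le n^{3k-5}$), uses sets $T_{j,\ell}$ with geometrically spaced extensions $\dtent_j+2^\ell p_j$ and cost $2^\ell w_jp_j$, and writes down an explicit fractional solution $x_{j,0}=1$, $x_{j,\ell}=4/(2^\ell\log n)$ of cost $O(1)\sum_{j\in J_{k-1}\cup J_k}w_jp_j$; feasibility is proved by a harmonic-sum argument over the slots where $S_k$ processes big jobs inside a dangerous interval, crucially exploiting that such an interval has length at least $n^{3k-3}$ while the ``empty'' slots number at most $nQ\le n^{3k-4}$. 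Your proposal also omits a second necessary ingredient: extending deadlines creates new relevant intervals ending at extended deadlines, which need not be safe; the paper handles this by padding every extended deadline by $+Q$ to obtain final deadlines and proving separately (Lemma~\ref{lem:feasibility}) that all final intervals are safe, before invoking the Hall/EDF argument. Without a replacement for these two steps—an actually feasible cheap fractional cover and a treatment of the newly created intervals—your argument does not go through, even though the size-gap intuition you cite (the frozen jobs occupy only $Q\ll n^{3k-3}$ slots) is indeed the structural fact the paper's explicit construction exploits.
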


	Towards proving the above Theorem \ref{thm:spread}, we first set up some notation. Consider an arbitrary instance $\pi$ of min-WPFT, with $n$ jobs and $P \le 2^{O(n)}$. Partition the jobs into $K = O(\log_nP) = O(n)$ classes, where for $k \ge 1$ class $J_k$ contains all jobs of processing time in $[n^{3k-3}, n^{3k})$. Consider now $K-1$ sub-instances of $\pi$, where for $k \in [2, K]$ the instance $\pi_k$ contains those jobs of the two classes $k-1$ and $k$; that is, $\pi_k : = J_k \cup J_{k-1}$. Each instance $\pi_k$ has spread at most $n^6$, and hence one can run ALG on it to obtain a schedule $S_k$ that is $\rho(n, n^6)$-approximately optimal.
	
	We shall use schedules $S_2, S_3, \cdots, S_K$ in order to derive our final schedule ${\bfS}$ for $\pi$. This will be done in an inductive manner. Initially, we have $\bfS_2 = S_2$. For every $k = 3, 4, \cdots, K$, we will construct a schedule ${\bfS}_k$ which contains all jobs in $J_1 \cup J_2 \cup \cdots \cup J_k$, using the two schedules ${\bfS}_{k-1}$ and $S_k$. Hence, our final schedule is $\bfS= \bfS_K$.
	
	For a fixed $k \geq 3$, we derive the schedule ${\bfS}_k$ for the jobs $J_1 \cup J_2 \cup \cdots J_k$ as follows. All jobs in $J_1 \cup \cdots \cup J_{k-2}$ are scheduled in ${\bfS}_k$ exactly as they are in ${\bfS}_{k-1}$.
	Hence, their deadlines in the schedule $\bfS_k$ is same as that in ${\bfS}_{k-1}$.
	For every job $j \in J_{k-1} \cup J_k$ we shall associate a {\em tentative deadline} $\dtent_j$ by which the job has to finish. (Later, some tentative deadlines will be changed to {\em extended deadlines}.) 
	For a job $j \in J_{k-1}$, the tentative deadline $\dtent_j$ is the latest of the two completion times in ${\bfS}_{k-1}$, $S_k$; formally, $\dtent_j := \max \left \{C_j({\bfS}_{k-1}),C_j(S_k) \right\}$. For a job $j \in J_k$, $\dtent_j$ is the completion time of job in the schedule $S_k$; $\dtent_j := C_j(S_k)$. Recall that jobs in $J_k$ do not participate in ${\bfS}_{k-1}$. See Figure~\ref{fig:tentative} for the definition.
	
	Our intention is to schedule all jobs from the set $J_{k-1} \cup J_k$ such that all jobs meet their tentative deadlines. If we could achieve this, then the flow-time of job $j \in J_{k-1} \cup J_k$ is at most $\dtent_j - r_j$, and we would be done.
	This is because the total weighted flow-times of jobs belonging to classes $J_k$ and $J_{k-1}$ in $S_k$ is at most  $\rho(n, \hat P)$ times their total weighted flow-time in an optimal schedule.
	Summing over all job classes we get a $2\rho(n, \hat P)$ approximation as each class $k$ participates exactly twice; once in $S_k$ and once in  $S_{k+1}$, which can be charged to their cost in the optimal solution.
	However, the tentative deadlines for jobs in $J_k$ and $J_{k-1}$ may not satisfy the condition in Theorem \ref{thm:edf}.
	Hence, we may need to extend the deadlines of few jobs.
	Extending the deadlines of jobs, however, increases the flow-time of jobs. Thus, our goal is to extend the deadlines of jobs in a such way that the increase in weighted flow-time is not too much and the requirement in Theorem \ref{thm:edf} is satisfied.  The crucial theorem we shall prove is the following.
	
	\begin{figure}
		\centering
		\includegraphics[width=0.9\textwidth]{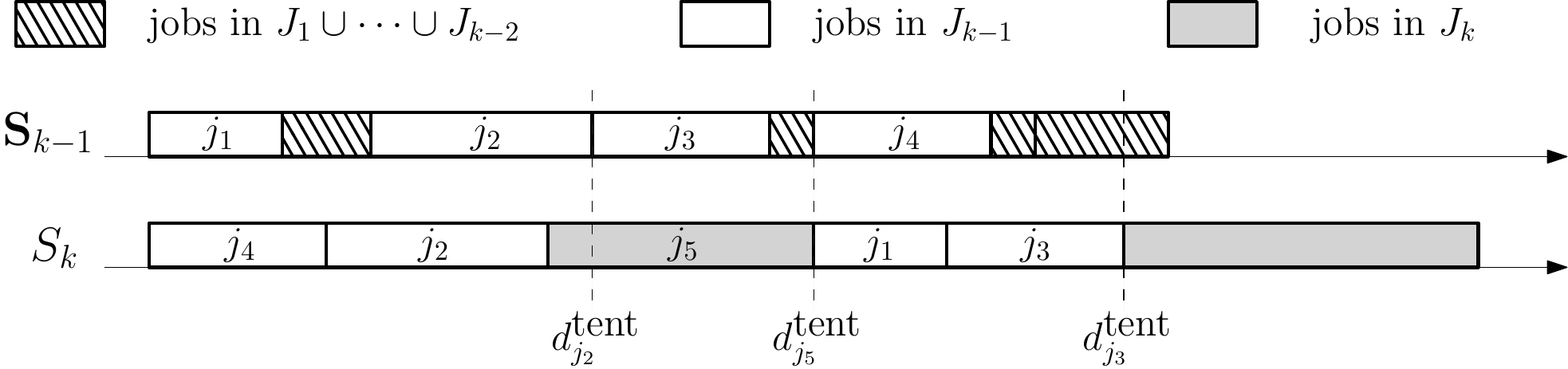} \caption{Example for definition of tentative deadlines.}\label{fig:tentative}
	\end{figure}
	
	\medskip

	\begin{thm} \label{thm:main}
		 In polynomial time we can find a schedule $\bfS_k$ of $J_1 \cup J_2 \cup \cdots \cup J_k$ where the scheduling of jobs belonging to class $k-2$ or lower remains the same as in $\bfS_{k-1}$ and
		\begin{align*}
			\sum_{j \in J_{k-1} \cup J_k} w_j \max\left\{0, C_j(\bfS_k) - \dtent_j\right\} \qquad \leq \qquad O(1) \cdot \sum_{j \in J_{k-1} \cup J_k} w_j p_j.
		\end{align*}
	\end{thm}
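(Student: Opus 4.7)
The plan is to reformulate the task of extending tentative deadlines as a weighted geometric set-cover instance in the Bansal--Pruhs framework~\cite{BansalPruhs14}, exhibit a fractional cover of small cost, and invoke the rounding machinery of~\cite{BansalPruhs12, CGKS12, Varadarajan10}. I expect the hard step to be producing the cheap fractional cover.

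First I would use Theorem~\ref{thm:edf} to translate ``$\bfS_k$ is a valid schedule'' into a family of interval inequalities. Let $\free(I)$ denote $|I|$ minus the total processing time that $\bfS_{k-1}$ spends on $J_1 \cup \cdots \cup J_{k-2}$ inside $I$. For every interval $I = (t_1, t_2]$ whose endpoints arise as a release time and an (extended) deadline of jobs in $J_{k-1} \cup J_k$, feasibility under EDF requires $\sum_{j : r_j \geq t_1,\, \dext_j \leq t_2} p_j \leq \free(I)$. Combining the feasibility of $\bfS_{k-1}$ (whose $J_{k-1}$-completion times are at most the corresponding $\dtent_j$) with the feasibility of $S_k$ for the jobs of $J_{k-1} \cup J_k$ over the full length $|I|$, I would derive the bound $V(I) \leq |I| - \free(I)$ on the ``violation'' of the tentative deadlines, noting that $|I| - \free(I)$ is exactly the small-job processing time that $\bfS_{k-1}$ uses inside $I$.

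Next I would phrase the extensions as a covering problem. For each job $j \in J_{k-1} \cup J_k$ let $\Delta_j \geq 0$ be its extension, so $\dext_j = \dtent_j + \Delta_j$, and the objective is $\sum_j w_j \Delta_j$. After discretising $\Delta_j$ into dyadic levels $2^\ell p_j$, for each pair $(j,\ell)$ I introduce a ``set'' $O_{j,\ell}$ of cost $w_j \cdot 2^\ell p_j$ that contributes $p_j$ toward each interval $I = (t_1,t_2]$ with $r_j \geq t_1$ and $\dtent_j \leq t_2 < \dtent_j + 2^\ell p_j$; each interval must receive total contribution at least $V(I)$. The crux of the argument---and the step I anticipate will require the most work---is to exhibit a feasible fractional cover of cost $O(\sum_{j \in J_{k-1} \cup J_k} w_j p_j)$. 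The intuition is that the obstruction from $J_1 \cup \cdots \cup J_{k-2}$ inside any interval can be charged to modest extensions of nearby large jobs: every large job has size at least $n^{3k-6}$ while small jobs are strictly smaller, so a single extension of order $O(p_j)$ on one large job absorbs many units of small obstruction at once, and the feasibilities of $\bfS_{k-1}$ and $S_k$ together certify that a suitable large job is available to accept the charge near any violated $t_2$.

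Finally, I would verify that $\{O_{j,\ell}\}$ forms a set system of linear union complexity---exploiting that each $O_{j,\ell}$ is monotone in the $t_1$-coordinate and supported on a single interval in the $t_2$-coordinate---so that the algorithm of~\cite{BansalPruhs12, CGKS12, Varadarajan10} converts the cheap fractional cover into an $O(1)$-approximate integer cover of cost $O(\sum_{j \in J_{k-1} \cup J_k} w_j p_j)$. The resulting $\Delta_j$ define feasible extended deadlines, so running EDF on $J_{k-1} \cup J_k$ in the free slots of $\bfS_{k-1}$ produces $\bfS_k$ with $C_j(\bfS_k) \leq \dtent_j + \Delta_j$, and hence $\sum_j w_j \max\{0, C_j(\bfS_k) - \dtent_j\} \leq \sum_j w_j \Delta_j = O\!\left(\sum_{j \in J_{k-1} \cup J_k} w_j p_j\right)$, as required.
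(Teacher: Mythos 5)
Your high-level plan (encode deadline extensions as a Bansal--Pruhs style geometric cover, find a cheap fractional solution, round via quasi-uniform sampling) is the same framework the paper uses, but the proposal leaves out the two steps that actually carry the proof. First, the fractional cover of cost $O\left(\sum_{j\in J_{k-1}\cup J_k} w_j p_j\right)$ is exactly the crux, and you only offer the intuition that small-job obstruction ``can be charged to modest extensions of nearby large jobs.'' The paper makes this quantitative: it restricts attention to $\Jbig=\{j: p_j\ge Q\}$ where $Q=p(J_1\cup\cdots\cup J_{k-2})\le n^{3k-5}$, sets $x_{j,0}=1$ and $x_{j,\ell}=\Theta\!\left(\frac{1}{2^{\ell}\log n}\right)$, and proves feasibility by a harmonic-sum argument: in a dangerous interval $(t_1,t_2]$ all but $nQ$ slots are occupied under $S_k$ by jobs of $\Jbig\cap J(I)$, each such job's fractional value dominates $\frac{2}{\log n}\sum_t \frac{1}{t_2-t}$ over its slots, and since $t_2-t_1\ge n^{3k-3}\ge n^2\cdot nQ$ the total is at least $\frac{2}{\log n}\ln\frac{t_2-t_1}{nQ}\ge 1$. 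Without an argument of this kind your cost bound is unsupported. Second, your formulation requires each interval $I$ to be covered to demand $V(I)$ with sets contributing $p_j$; this is a capacitated/demand cover, not the plain weighted geometric set cover to which the cited rounding theorem (Theorem~\ref{thm:rectanglecover}) directly applies, so you would still have to justify an $O(1)$-loss rounding for that richer formulation. The paper avoids this entirely by observing that the deficit of any dangerous interval is at most $Q$, so pushing a \emph{single} job with $p_j\ge Q$ past $t_2$ repairs it; the problem then becomes an ordinary set cover by rectangles $(0,r_j]\times[\dtent_j,\dtent_j+2^\ell p_j)$ with linear union complexity.

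There is also a correctness gap at the end: covering the original dangerous intervals does not by itself make EDF feasible, because extending deadlines creates \emph{new} relevant intervals whose right endpoints are the extended deadlines, and these may be unsafe. Your conclusion ``the resulting $\Delta_j$ define feasible extended deadlines, so running EDF \ldots produces $\bfS_k$ with $C_j(\bfS_k)\le \dtent_j+\Delta_j$'' skips this. The paper handles it by further padding every extended deadline of a big job to a final deadline $\dfinal_j=\dext_j+Q$ (an extra cost of at most $\sum_{j\in\Jbig}w_jp_j$, since $p_j\ge Q$) and then proving, via the case analysis of Lemma~\ref{lem:feasibility} and a Hall's-theorem matching argument, that all intervals ending at final deadlines satisfy the condition of Theorem~\ref{thm:edf}. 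You would need an analogous safeguard before invoking EDF.
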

	
	We prove the above theorem by reducing our problem to a geometric set-cover problem.
	For now, we assume Theorem~\ref{thm:main} and finish the proof of Theorem~\ref{thm:spread}. Let $\bfS = \bfS_K$ be our final schedule of jobs $J$. We first show that the final schedule $\bfS$ indeed has a small cost.  For every $k \in [3, K]$, we have
	\begin{align}
		\wF(\bfS_k) &= \sum_{j \in J_1 \cup \cdots J_{k-2}} w_j F_j(\bfS_{k-1}) + \sum_{j \in J_{k-1} \cup J_k} w_j F_j(\bfS_k) \label{equ:copy} \\
		&=\sum_{j \in J_1 \cup \cdots J_{k-2}} w_j F_j(\bfS_{k-1}) + \sum_{j \in J_{k-1} \cup J_k} w_j \left(\dtent_j - r_j\right) + \sum_{j \in J_{k-1} \cup J_k}w_j\left(C_j(\bfS_k) - \dtent_j \right) \nonumber \\
		&\leq \sum_{j \in J_1 \cup \cdots J_{k-2}} w_j F_j(\bfS_{k-1}) + \sum_{j \in J_{k-1}} w_j \max\left\{F_j(\bfS_{k-1}), F_j(S_k)\right\} \nonumber \\
		&\hspace*{0.4\textwidth}+ \sum_{j \in J_k} w_j F_j(S_k) + O(1) \cdot \sum_{j \in J_{k-1} \cup J_k} w_j p_j \label{inequ:apply-main}\\
		&\leq  \sum_{j \in J_1 \cup \cdots \cup J_{k-1}} w_j F_j(\bfS_{k-1}) + \sum_{j \in J_{k-1} \cup J_k} w_j F_j(S_k) + O(1)\cdot  \sum_{j \in J_{k-1} \cup J_k} w_jp_j \label{inequ:max-to-sum}\\
		&= \wF(\bfS_{k-1}) + \wF(S_k) + O(1)\cdot \sum_{j \in J_{k-1} \cup J_k} w_j p_j. \label{inequ:reduction}
	\end{align}
	\eqref{equ:copy} holds since jobs in $J_1 \cup \cdots \cup J_{k-2}$ are scheduled in $\bfS_k$ in the same way as in $\bfS_{k-1}$,\eqref{inequ:apply-main} follows from the definition of $\dtent_j$'s and Theorem~\ref{thm:main}, \eqref{inequ:max-to-sum} is obtained by replacing $\max\left\{F_j(\bfS_{k-1}), F_j(S_k) \right\}$ with $F_j(\bfS_{k-1}) + F_j(S_k)$.
	
	Considering the sequence \eqref{inequ:reduction} for all $k$ from $3$ to $K$, we have
	\begin{align*}
		\wF(\bfS) = \wF(\bfS_K) \leq \wF(\bfS_2) +  \sum_{k=3}^K \wF(S_k) + O(1)\sum_{j \in J} w_j p_j =  \sum_{k=2}^K \wF(S_k) + O(1)\sum_{j \in J} w_j p_j.
	\end{align*}
	
	Let $\opt$ denote the total weighted flow-time of jobs in the optimum schedule, and $\opt_k$ be the weighted flow-time of all jobs in $J_k$ in the optimum solution.  Then, we have $\wF(S_k) \leq \rho(n, n^6)\left(\opt_{k-1} + \opt_k\right)$.  So, the above inequality implies
	$$\wF(\bfS) \leq 2\rho(n, n^6)\sum_{k=1}^K\opt_k + O(1) \cdot \sum_{j \in J} w_j p_j \leq \left(2\rho(n, n^6) + O(1)\right) \opt.$$
	Taking the constant $c$ in the statement Theorem~\ref{thm:spread} to be larger than the $O(1)$ term above, the approximation ratio given by the algorithm is at most $2\rho(n, n^6) + c$, as desired.
	
	Let us now analyze the running time of the algorithm. We need to run the algorithm ALG at most $K = O(n)$ times to construct schedules $S_2, S_3, ..., S_K$. Each $S_k$ is constructed on an instance with at most $n$ jobs with the spread at most $n^6$. Constructing the  schedules $\bfS_k$ for $k = 2,3,...K$ from $S_2, \cdots, S_K$  also takes polynomial time. So, the running time of the whole algorithm is bounded by $O\left(nf(n, n^6) + n^{O(1)}\right)$. This finishes the proof of Theorem~\ref{thm:spread}.  \medskip
	
	From now on we focus on proving  Theorem \ref{thm:main}. The theorem is proved in Sections~\ref{sec:reduction} to \ref{sec:wrapup}, where we fix the integer $k \geq 3$. We reduce the problem to a weighted set-cover problem in Section~\ref{sec:reduction}, give a fractional solution to the set-cover instance in Section~\ref{sec:fractional}, round the fractional solution in Section~\ref{sec:rounding}, and finally construct our schedule $\bfS_k$ and analyze its cost in Section~\ref{sec:wrapup}.
	
\subsection{Reduction to a Set Cover Problem}
\label{sec:reduction}	
		
	Recall that the schedule $\bfS_k$ is constructed from schedules $\bfS_{k-1}$ and $S_k$. At this stage, the time line is as follows. Some time slots are {\em occupied} by jobs in $J_1 \cup \cdots \cup J_{k-2}$. Other time slots are {\em free}. For every job $j \in J_{k-1} \cup J_k$, we have  a release time $r_j$ and a tentative deadline $\dtent_j$.
	
	We reduce the problem of extending deadlines to a weighted set cover problem as follows. A {\em relevant interval} is a consecutive sequence of unit slots that starts with a release time of some job and ends with a tentative deadline of a (possibly different) job. Therefore, for every two jobs $j,j' \in J_{k-1} \cup J_{k}$ with $r_{j'} < \dtent_j$ (we allow $j'=j$), we have the relevant interval $(r_{j'}, \dtent_j]$.  Hence there are at most $n^2$ relevant intervals.
	
	Before describing what constitutes sets in our reduction, we now define some notations and present some properties of the relevant intervals that will motivate the way we define the sets. For every interval $I =  (t_1, t_2]$, let $\free(I)$ denote the total length of free time slots in $I$. Recall that a time slot $(t-1, t] \in (t_1, t_2]$ is free if no job from class $k-2$ and below is scheduled in $(t-1, t]$ according to $\bfS_{k-1}$.
	Let
	\begin{equation*}
		Q:= p(J_1 \cup \cdots \cup J_{k-2})
	\end{equation*}
	denote the total number of these occupied time slots. Observe that $Q \le n\cdot n^{3k-6} = n^{3k-5}$.
	A job $j \in J_k \cup J_{k-1}$ is said to be {\em contained} in a relevant interval if $(r_j, \dtent_j]  \in (t_1,t_2]$. For a relevant interval $I: = (t_1, t_2]$, let $J(I)$ be the set of jobs contained in $I$. A relevant interval $I$ is {\em safe} if $p(J(I)) \leq \free(I)$, and {\em dangerous} otherwise. In the weighted set cover instance we define, every dangerous relevant interval corresponds to a single {\em item}.
		
	\begin{itemize}
		\item If all relevant intervals are safe, then every job $j \in J_{k} \cup J_{k-1}$ can be scheduled in the interval $(r_j, \dtent_j]$, and we will be done. This follows from Theorem \ref{thm:edf}.

		\item A dangerous relevant interval $I = (t_1, t_2]$ must contain at least one job from $J_k$, which implies that $t_2 - t_1 \geq n^{3k-3}$. This is true because 
		all the jobs from $J_{k-1}$ that are contained in $I$ were scheduled within the free unit slots of $I$ in the schedule ${\bfS}_{k-1}$.

		\item For every relevant interval $I:= (t_1, t_2]$, we have $p(J(I)) \leq t_2 - t_1$. 
		 This is because all jobs in $J(I)$ have $(r_j, C_j(S_k)] \subseteq (r_j, \dtent_j] \subseteq (t_1, t_2]$, i.e, were scheduled within $I$ under $S_k$. 
		As $I$ may have at most $Q$ occupied unit slots, the interval $I$ would become safe if for some job $j \in J(I)$ with $p_j \ge Q$, we change the tentative deadline of $j$ to be some {\em extended deadline} $\dext_j > t_2$, so that $j$ is no longer contained in $I$. 
		Motivated by this observation, we define
		\begin{equation*}
			\Jbig := \left\{j \in J_{k-1} \cup J_k: p_j \geq Q \right\}
		\end{equation*}
		to be the set of jobs in $J_{k-1} \cup J_k$ with size at least $Q$. Notice that $J_k \subseteq \Jbig$ since all jobs in $J_k$ have size at least $n^{3k-3} > Q$. We say that a job $j \in \Jbig \cap J(I)$ {\em covers} interval $I$, if we extend the deadline of the job such that it is no longer contained in $I$.  

		\item If job $j$ has $\dext_j > \dtent_j$, then it creates {\em extended intervals} whose right endpoint is the extended $\dext_j$. 
		We wish to have the property that if all (original) dangerous intervals are covered (by extending deadlines of jobs), then all of the extended intervals that are created are also safe. To ensure this property, we will later replace every extended deadline $\dext_j$ to $\dext_j + Q$, making it the {\em final deadline} for the job. We denote the final deadline of a job $j$ by $\dfinal_j$. These final deadlines give rise to the {\em final intervals}. We show that if all the original dangerous intervals are covered, then all the final intervals are also safe.
	\end{itemize}
	
	We are now ready to describe the sets. Every $j \in \Jbig$ and every integer $\ell \in [0, L:= \lceil7\log n\rceil]$ will give rise to one set $T_{j, \ell}$ that corresponds to having an extended deadline of $\dtent_j + 2^\ell p_j$ for the job. 
	Set $T_{j,\ell}$ will cover all items (dangerous intervals) that contain job $j$ with its tentative deadline $\dtent_j$, but not with the extended deadline $\dtent_j + 2^\ell p_j$. That is, the set $T_{j, \ell}$ covers a relevant interval $(t_1, t_2]$ if and only if
	\begin{equation*}
		t_1 \leq r_j < \dtent_j \leq t_2 < \dtent_j + 2^\ell p_j.
	\end{equation*}
	We associate a cost $c_{j,\ell} =  2^{\ell}w_j p_j$ with set $T_{j,\ell}$, giving a weighted set cover instance.
	
\subsection{A Fractional Solution}
\label{sec:fractional}
	
	We show that the weighted set-cover instance defined in the previous section have a fractional solution with cost at most $O(1)\cdot\sum_{j \in J_k \cup J_{k-1}} w_j p_j$. We construct the fractional solution as follows. 
	For every $j \in \Jbig$ and $\ell \in [0, L]$, let $0 \le x_{j,\ell} \le 1$ be a fractional variable indicating the extent to which $T_{j,\ell}$ participates in the fractional set-cover. Define the variables as follows:
	\begin{align*}
		x_{j, \ell} = \begin{cases}
			1 & \textbf{if } \ell = 0\\
			\frac{4}{2^{\ell}\log n} & \textbf{if } \ell \in [L]
		\end{cases}.
	\end{align*}
	
	
	\begin{obs}
		The cost of fractional solution $x$ is at most $O(1)\sum_{j \in \Jbig} w_jp_j$.
	\end{obs}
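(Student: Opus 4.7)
The plan is a direct computation. By definition, the cost of $x$ is
\[
	\sum_{j \in \Jbig}\sum_{\ell=0}^L x_{j,\ell}\, c_{j,\ell},
\]
and after substituting $c_{j,\ell} = 2^\ell w_j p_j$ together with the definition of $x_{j,\ell}$, the level $\ell = 0$ contributes exactly $\sum_{j \in \Jbig} w_j p_j$. For each $\ell \in [L]$, the factor $2^\ell$ in the cost cancels against the $2^\ell$ in the denominator of $x_{j,\ell}$, which is precisely the reason for defining $x_{j,\ell}$ to scale like $1/2^\ell$. So the total contribution from the levels $\ell \ge 1$ equals
\[
	\sum_{j \in \Jbig}\sum_{\ell=1}^L \frac{4 w_j p_j}{\log n} \;=\; \frac{4L}{\log n}\sum_{j \in \Jbig} w_j p_j.
\]

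The second ingredient is the choice $L = \lceil 7\log n\rceil$, which makes $4L/\log n$ bounded by an absolute constant (at most $32$ for $n \ge 2$). Adding the two contributions yields
\[
	\sum_{j\in \Jbig}\sum_{\ell=0}^L x_{j,\ell}\, c_{j,\ell} \;\le\; \Bigl(1 + \tfrac{4L}{\log n}\Bigr)\sum_{j \in \Jbig} w_j p_j \;=\; O(1)\sum_{j \in \Jbig} w_j p_j,
\]
which is the bound claimed in the observation.

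There is really no obstacle here; the statement is a one-line arithmetic check. I would briefly note in passing that the values $x_{j,\ell} = 4/(2^\ell \log n)$ all lie in $[0,1]$ for $n$ sufficiently large (which may be assumed), so $x$ is a legitimate LP assignment, and that the observation concerns only cost, feasibility as a fractional cover being verified elsewhere. The only conceptual content is the dyadic balancing: placing mass $\Theta(1/(2^\ell \log n))$ on the level-$\ell$ set neutralises its cost scaling $2^\ell$, so that each of the $O(\log n)$ levels contributes a $1/\log n$ fraction of $w_j p_j$, and these telescope to $O(1)$.
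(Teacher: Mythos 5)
Your computation is exactly the paper's proof: expand the cost, cancel the $2^\ell$ factors, and use $L = \lceil 7\log n\rceil$ to bound $1 + 4L/\log n$ by a constant. Correct, and essentially identical to the argument in the paper.
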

	\begin{proof} Recall that the cost of $T_{j, \ell}$ is $2^\ell w_jp_j$. Now consider
		\begin{flalign*}
			&& \sum_{j \in \Jbig} \sum_{\ell = 0}^L x_{j, \ell} \cdot 2^\ell w_j p_j &= \sum_{j \in \Jbig} w_j p_j \left(1 + \sum_{\ell = 1}^{L}2^\ell \cdot \frac{4}{2^\ell \log n}\right) = \left(1 + \frac{4L}{\log n}\right)\sum_{j \in \Jbig} w_jp_j &&\\
			 && &= O(1)\sum_{j \in \Jbig} w_jp_j. && \qedhere
		\end{flalign*}
	\end{proof}
	
	Now we prove that $x$ is indeed a valid fractional solution to the weighted set-cover instance.
	\begin{lemma}
		\label{lem:fractional}
		The $x$ constructed above covers all the items (dangerous relevant intervals) to an extent of at least  1.
	\end{lemma}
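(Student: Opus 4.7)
The plan is to fix an arbitrary dangerous relevant interval $I = (t_1, t_2]$, write $M := t_2 - t_1$ and $g_j := t_2 - \dtent_j \ge 0$ for $j \in J(I)$, and show that $\sum_{(j,\ell) : T_{j,\ell} \text{ covers } I} x_{j,\ell} \ge 1$. A pair $(j, \ell)$ contributes precisely when $j \in \Jbig(I) := \Jbig \cap J(I)$ and $2^\ell p_j > g_j$. If some such $j$ has $g_j < p_j$, then $T_{j, 0}$ alone covers $I$ with weight $x_{j,0} = 1$ and we are done, so I may assume $g_j \ge p_j$ throughout $\Jbig(I)$. For each such $j$ let $\ell_j$ be the smallest integer with $2^{\ell_j} p_j > g_j$; then $\ell_j \ge 1$ and $2^{\ell_j} \le 2 g_j/p_j$, so whenever $\ell_j \le L$,
\begin{align*}
\sum_{\ell = \ell_j}^{L} x_{j, \ell} \;\ge\; \frac{4}{2^{\ell_j} \log n} \;\ge\; \frac{2\, p_j}{g_j \log n},
\end{align*}
while if $\ell_j > L$ the contribution of $j$ is zero.

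Next I would show that jobs with $\ell_j > L$ are nearly irrelevant. Partition $\Jbig(I) = A \sqcup B$ according to whether $\ell_j \le L$. Any $j \in B$ obeys $p_j \le g_j / 2^L \le M/n^7$, so $p(B) \le n \cdot M/n^7 = M/n^6$. Since $I$ is dangerous, $p(J(I)) > \free(I) \ge M - Q$; the at most $n$ small jobs in $J(I) \setminus \Jbig$ (each of size $< Q$) contribute at most $nQ$, giving $p(\Jbig(I)) \ge M - (n+1)Q$ and therefore $p(A) \ge M - (n+1)Q - M/n^6$.

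The remaining step is to telescope $\sum_{j \in A} p_j / g_j$. Order $A$ as $j_1, \dots, j_m$ by $\dtent$ increasing (so $g_{j_l}$ is decreasing) and put $S_l := \sum_{l' \le l} p_{j_{l'}}$. In $S_k$, each $j_{l'}$ with $l' \le l$ is released at $r_{j_{l'}} \ge t_1$ and completes at $C_{j_{l'}}(S_k) \le \dtent_{j_{l'}} \le \dtent_{j_l}$, so it is processed entirely within $(t_1, \dtent_{j_l}]$; the packing constraint of $S_k$ therefore yields $S_l \le \dtent_{j_l} - t_1 = M - g_{j_l}$. Combining $1/g_{j_l} \ge 1/(M - S_l)$ with the inequality $(u-v)/v \ge \ln(u/v)$ for $u = M - S_{l-1}$, $v = M - S_l$,
\begin{align*}
\sum_{j \in A} \frac{p_j}{g_j} \;\ge\; \sum_{l=1}^{m} \frac{S_l - S_{l-1}}{M - S_l} \;\ge\; \sum_{l=1}^{m} \ln \frac{M - S_{l-1}}{M - S_l} \;=\; \ln \frac{M}{M - p(A)}.
\end{align*}
Since $I$ is dangerous it contains at least one job of $J_k$, so $M \ge n^{3k-3}$, whereas $(n+1)Q \le 2 n^{3k-4}$ and $M/n^6 = o(M)$; hence $M - p(A) = O(M/n)$ and the right-hand side is $\Omega(\log n)$. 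Multiplying by the coefficient $2/\log n$ from the per-job bound, the total coverage of $I$ is at least a fixed constant strictly larger than $1$ for all sufficiently large $n$ (the bounded range can be handled by brute force).

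The hard part is making the telescoped logarithm grow like $\log n$ so that it cancels the $1/\log n$ baked into the weights $x_{j,\ell}$. This rests entirely on the inequality $M \ge n^{3k-3} \gg nQ + M/n^6$, which is precisely the slack bought by the $n^3$-multiplicative gap between class $J_k$ and the lower classes $J_1, \dots, J_{k-2}$, together with the choice $L = \lceil 7 \log n \rceil$ that renders the ``too-large-gap'' jobs in $B$ asymptotically negligible compared with $M$.
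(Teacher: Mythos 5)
Your proposal is correct in substance and follows the same underlying strategy as the paper: reduce to the observation that each big job $j$ not already handled by $T_{j,0}$ contributes roughly $\frac{2}{\log n}\cdot\frac{p_j}{t_2-\dtent_j}$, and then show these contributions sum to $\Omega(1)$ because the big jobs essentially fill the dangerous interval up to $O(nQ)$ slack, which is tiny compared to $|I|\ge n^{3k-3}$. The bookkeeping differs in two ways, both legitimate. First, where the paper converts the per-job bound into a slot-wise harmonic sum $\sum_t 1/(t_2-t)$ over the at least $|I|-nQ$ slots where $S_k$ processes big contained jobs, you sort the jobs of $A$ by tentative deadline and telescope $\sum p_j/g_j\ge\ln\frac{M}{M-p(A)}$ via the packing constraint $S_l\le \dtent_{j_l}-t_1$; these are equivalent arguments (your telescoping is the discrete integral the paper computes over time slots). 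Second, you handle jobs with $\ell_j>L$ by showing their total size is at most $M/n^6$ (your set $B$), which is a cleaner and more self-contained substitute for the paper's terse claim that $\ell\le L$ always holds for the relevant jobs. The one real caveat is quantitative: your chain gives coverage $\frac{2}{\log n}\ln\frac{M}{M-p(A)}\ge 2\ln 2-O(1/\log n)$, which exceeds $1$ only once $n$ is above a moderate constant, and "brute force for small $n$" does not actually prove the lemma as stated (it asserts that this particular $x$, with the constant $4$, is a feasible fractional cover for every $n$). The paper avoids this by slightly tighter accounting: it only loses the $nQ$ empty slots and bounds $\ln\frac{|I|}{nQ}\ge\ln n$ exactly, giving $2\ln 2\ge 1$ for all $n$; alternatively, your argument is repaired for all $n$ by either tightening $(n+1)Q+M/n^6$ to a single $M/n$-type term as the paper does, or by raising the constant $4$ in $x_{j,\ell}$, which changes the fractional cost only by an $O(1)$ factor and so is harmless elsewhere. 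So: same approach, a nicer treatment of the $\ell>L$ corner case, but you should close the small-$n$ constant-tracking gap rather than wave at it.
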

	
	\begin{proof}
		Consider a dangerous interval $I = (t_1,t_2]$. Recall that $J(I)$ denotes the set of jobs belonging to sets $J_k \cup J_{k-1}$ contained in $I$. 
		We already argued that $t_2 - t_1 \geq n^{3(k-1)}$ since $(t_1, t_2]$ must contain a job in $J_k$.   We say a time slot $(t-1, t] \subseteq (t_1, t_2]$ is empty, if $S_k$ is not processing a job in $J(I) \cap \Jbig$ during $(t-1, t]$. If we remove jobs not in $J(I)$ from schedule $S_k$, then $S_k$ contains less than $Q$ idle slots in $I$, since otherwise $(t_1, t_2]$ would not be dangerous. Also, the total length of jobs in $J(I) \setminus \Jbig$ is at most $(n-1)Q$ since every such job has length less than $Q$. Thus, there are at most $Q + (n-1)Q = nQ$ empty slots in $I$.

		We can assume that every job $j \in \Jbig \cap J(I)$ has $\dtent_j + p_j \leq t_2$, since otherwise $I$ is covered by $T_{j, 0}$ to an extent of $x_{j, 0} = 1$. Now, focus on each $j \in \Jbig \cap J(I)$. The contribution of $j$ towards the fractional set-cover is at least $\frac{4}{2^\ell \log n}$ where $\ell \geq 1$ is the minimum integer such that $\dtent_j + 2^\ell p_j > t_2$.  Notice that $\ell \in [L]$ since $2^L p_j \geq n^7\cdot n^{3k-6} = n^{3k+1}$ is more than the total length of all jobs in $J_1 \cup \cdots \cup J_k$.
		
		 This implies that $C_j(S_{k}) + 2^{\ell-1}p_j \leq \dtent_j + 2^{\ell - 1} p_j \leq t_2$, by our choice of $\ell$ (recall that $\dtent_j + 2^0p_j \leq t_2$).  So, the contribution of $j$ is at least
		\begin{align*}
			\frac{4}{2^\ell \log n} =  \frac{2}{\log n} \cdot p_j \cdot \frac{1}{2^{\ell-1}p_j}  \geq \frac{2}{\log n} \sum_{t \in I: j\text{ processed in }(t-1, t] \text{ in } S_{k}} \frac{1}{t_2 - t}.
		\end{align*}
		The last inequality used that $t_2 - t \geq t_2 - C_j(S_k)\geq 2^{\ell - 1}p_j$ for every $t$ contributing to the sum.
		So, the total contribution of all jobs $j \in \Jbig \cap J(I)$ is at least
		\begin{align*}
			\frac{2}{\log n} \sum_{t \in I: \text{some job in $\Jbig\cap J(I)$ is processed in }(t-1, t] \text{ by } S_k} \frac{1}{t_2 - t}.
		\end{align*}
		
		Since there are at most $nQ$ empty slots $(t-1, t]$, at most $nQ$ integers $t \in I$ are not contributing to the sum. So, the above quantity is at least
		\begin{flalign*}
			&& \frac{2}{\log n} \sum_{t = t_1+1}^{t_2 - nQ} \frac{1}{t_2 - t} = \frac{2}{\log n}\sum_{t' = nQ}^{t_2 - t_1 - 1} \frac1{t'} \geq \frac{2}{\log n} \cdot \ln \left(\frac{t_2 - t_1}{nQ} \right) \geq \frac{2}{\log n} \ln n \geq 1. &&
		\end{flalign*}
		The second-to-last inequality used that $Q \leq n^{3k-5}$ and $t_2 - t_1 \geq n^{3k-3}$.
	\end{proof}
	
\subsection{Rounding the Fractional Solution}
\label{sec:rounding}
	
	Next we show that there exists a rounding of solution $x$ with only a constant factor loss in the approximation ratio. Let $\cost(z)$ denote the cost of any fractional solution $z$ to the weighted set-cover instance.

	\begin{lemma}
	\label{frac2int}
		The fractional solution $x$ can be rounded in polynomial time to an integral solution $\tilde x$ such that $\cost(\tilde x) \leq O(1) \cdot \cost(x)$.
	\end{lemma}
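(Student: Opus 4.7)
The plan is to leverage the geometric structure of the weighted set-cover instance built in Section~\ref{sec:reduction} and invoke the LP-rounding framework of Bansal and Pruhs~\cite{BansalPruhs12}, which in turn builds on the results of Chan et al.~\cite{CGKS12} and Varadarajan~\cite{Varadarajan10}.

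First, I would re-cast the instance as a geometric set-cover problem. Each item (a dangerous relevant interval $(t_1,t_2]$) becomes the point $(t_1,t_2) \in \R^2$, and each set $T_{j,\ell}$ becomes the axis-aligned region
\[
R_{j,\ell} := (-\infty,\,r_j]\times[\dtent_j,\,\dtent_j + 2^\ell p_j).
\]
By the definition of $T_{j,\ell}$, the item is covered by the set iff its point lies in $R_{j,\ell}$. The cost $c_{j,\ell}=2^\ell w_j p_j$ carries over unchanged, and by Lemma~\ref{lem:fractional} the solution $x$ from Section~\ref{sec:fractional} is a feasible fractional cover of total cost $O(1)\cdot\sum_{j \in \Jbig} w_j p_j$.

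Second, I would verify the geometric hypothesis underlying the framework, namely that the family $\{R_{j,\ell}\}$ has union complexity linear in the number of rectangles. Because every rectangle extends to $-\infty$ in the $t_1$-direction, the only part of the boundary of the union that can be complicated is the right boundary, which is the upper envelope of partial constant functions $y\mapsto r_j$ defined on the $y$-intervals $[\dtent_j,\dtent_j+2^\ell p_j)$. Using that all right edges are supported on at most $n$ vertical lines of the form $t_1 = r_j$ indexed by job release times, and that horizontal top/bottom edges of distinct rectangles only cross a right edge at its endpoints, a direct sweep-line argument gives an $O(N)$ bound on the number of boundary vertices of the union of any $N$ rectangles in this family.

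Third, I would invoke the theorem of Bansal and Pruhs~\cite{BansalPruhs12}: for a weighted geometric set-cover instance whose set family has linear union complexity, any fractional cover of cost $C$ can be rounded in polynomial time to an integral cover of cost $O(C)$. Applying this to our $x$ yields the desired $\tilde x$ with $\cost(\tilde x)\leq O(1)\cdot\cost(x)$. The main obstacle will be the union-complexity step: a naive count could suggest only an $O(N\alpha(N))$ bound (as in the upper envelope of general interval-constant functions), so getting a clean $O(N)$ bound will require exploiting both the leftward anchoring at $-\infty$ and the fact that, for any fixed job $j$, the rectangles $R_{j,0}, R_{j,1},\dots, R_{j,L}$ are nested and share the same right edge line $t_1=r_j$, so their combined contribution to the union boundary is controlled by that of a single rectangle.
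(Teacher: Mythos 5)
Your proposal is correct and follows essentially the same route as the paper: reduce the set-cover instance to the geometric R2C instance with points $(t_1,t_2)$ and $y$-axis-anchored rectangles $R_{j,\ell}$, note the linear union complexity of such rectangles, and apply the Bansal--Pruhs quasi-uniform-sampling theorem, whose guarantee holds relative to the fractional optimum. The only difference is cosmetic: the paper simply cites the $O(n')$ union-complexity bound from \cite{BansalPruhs14} (Lemma~\ref{lem:uc}) rather than re-deriving it, and your worry about an $O(N\alpha(N))$ loss is unfounded since the union's right boundary is an envelope of horizontal segments whose breakpoints occur only at segment endpoints, giving $O(N)$ directly.
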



	As shown in~\cite{BansalPruhs14}, our weighted set-cover instance is equivalent to a geometric weighted set-cover instance of covering points in two dimensions by rectangles aligned with the $Y$-axis.
		In this problem, which we call as R2C, we are given a collection of points $\mathcal{P}$ in two dimensional space and a set of axis parallel rectangles $\mathcal{R}$. Each rectangle in $R \in \mathcal{R}$ is abutting $Y$-axis and has the form $(0, X_R] \times [Y^1_R, Y^2_R)$. The cost of picking a rectangle is $c_R$. The goal is to find a minimum weight subset of rectangles $\mathcal{R}' \subseteq \mathcal{R}$, such that for each point $p \in \mathcal{P}$ there is rectangle $R \in \mathcal{R}'$ that contains it. Now we construct an R2C instance from the weighted set-cover instance as follows.
	
	\paragraph{Reduction} For every item in our set-cover instance, which corresponds to a dangerous relevant interval $(t_1, t_2]$, we create a point $(t_1, t_2)$ in our R2C instance. 
	For every set $T_{j,\ell}$, for $j \in \Jbig$ and integer $\ell \in [0, L]$, we create a rectangle $R_{j,\ell} := (0, r_j] \times [\dtent_j, \dtent_j + 2^\ell p_j)$ of cost $2^\ell w_jp_j$.  Notice that the rectangle $R_{j,\ell}$ covers a point $(t_1, t_2)$ if and only if $t_1 \leq r_j < \dtent_j \leq t_2 < \dtent_j + 2^\ell p_j$, which is exactly the condition that the set $T_{j, \ell}$ covers the item $(t_1, t_2]$. 
	Thus, the constructed R2C instance is the equivalent to the original weighted set-cover instance.
	
	
%
	
	There are constant factor approximation algorithms known to solve the R2C problem. The main idea behind these algorithms is to exploit the structural properties of geometric objects. In particular, if the {\em union complexity} of the geometric objects is small, then the geometric set-cover instances admit good (better than $O(\log n)$) approximation factor. We will not concern ourselves with rigorous definition of the union complexity of objects; we refer the readers to \cite{Varadarajan10,BansalPruhs14} to more details. Intuitively speaking, for  a collection of geometric objects,
	the union complexity is the number of edges in the arrangement of the boundary of
	objects. For two-dimensional objects, this is the total number of vertices, edges and faces. Bansal and Pruhs~\cite{BansalPruhs14} showed the following result.
	
	\begin{lemma}
	\label{lem:uc}
	The union complexity of collection of $n'$ rectangles of type $(0,X] \times [Y^1,Y^2)$ is $O(n')$.
	\end{lemma}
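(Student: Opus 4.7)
The plan is to exploit the fact that every rectangle abuts the $Y$-axis, so the union $U = \bigcup_{i=1}^{n'} R_i$ has a particularly simple ``staircase'' shape. Concretely, I would observe that the horizontal cross-section of $U$ at any height $y$ is either empty or a single left-anchored interval $(0, f(y)]$, where
\[
f(y) := \max\bigl\{X_i : Y_i^1 \le y < Y_i^2\bigr\}.
\]
This is immediate because each rectangle covering height $y$ contributes exactly the strip $(0, X_i]$ at that height, and the union of such strips is $(0, f(y)]$. Hence the left boundary of $U$ lies on the line $x=0$, and the entire right ``profile'' of $U$ is described by the single function $f$.

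The key step is to bound the combinatorial complexity of $f$. I would argue that $f$ is piecewise constant with at most $2n' + 1 = O(n')$ pieces. Indeed, on any open interval of $y$-values that contains no endpoint of the form $Y_i^1$ or $Y_i^2$, the set $\{i : Y_i^1 \le y < Y_i^2\}$ of active rectangles does not change, so neither does the maximum of the $X_i$ over this set; thus $f$ can change value only at one of the at most $2n'$ endpoints. This is essentially the standard observation that the upper envelope of $n'$ horizontal segments (here: the segment at height $X_i$ over the $y$-interval $[Y_i^1, Y_i^2)$) has complexity $O(n')$, since envelope changes can only occur at segment endpoints.

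To finish, I would decompose $\partial U$ into: (i) vertical pieces of the line $x=0$, one per maximal interval of heights where $f$ is finite; (ii) vertical segments on the right of the form $\{v\} \times J$, one per maximal piece $J$ on which $f \equiv v$ is finite; and (iii) horizontal segments joining (i) and (ii) at each breakpoint of $f$. By the bound on the number of pieces of $f$, each of these families has size $O(n')$, and each breakpoint contributes only $O(1)$ vertices and edges. Summing yields $O(n')$ total vertices, edges, and faces, as claimed. I do not anticipate any substantive obstacle; the only mild bookkeeping is handling breakpoints where $f$ jumps up versus down and the transitions where $f$ becomes undefined, but in all cases the local contribution is constant.
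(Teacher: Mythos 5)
Your argument is correct, but note that the paper does not actually prove this lemma: it is quoted as a known fact from Bansal and Pruhs \cite{BansalPruhs14}, and the paper deliberately avoids even giving a rigorous definition of union complexity, deferring to \cite{Varadarajan10,BansalPruhs14}. What you have written is a self-contained elementary proof of that cited fact, and it works: since every rectangle is anchored at the $Y$-axis, the cross-section of the union at height $y$ is the left-anchored interval $(0,f(y)]$ with $f(y)=\max\{X_i: Y_i^1\le y< Y_i^2\}$, and your key observation --- that $f$ is piecewise constant because the set of active rectangles, and hence the maximum, can only change at one of the $2n'$ values $Y_i^1,Y_i^2$ --- is exactly the reason the envelope has linear rather than near-linear (Davenport--Schinzel) complexity; this is the point where horizontality of the ``segments'' $X_i\times[Y_i^1,Y_i^2)$ is used, and you use it correctly. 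The boundary decomposition into $O(n')$ right vertical pieces, $O(n')$ connecting horizontal pieces at breakpoints, $O(n')$ pieces on the axis, with $O(1)$ vertices per breakpoint and per component transition, then gives the claimed $O(n')$ bound on vertices, edges, and faces. So your route buys a short, fully explicit proof where the paper relies on a citation; the only thing you should make explicit if this were to be included is the definition of union complexity being used (boundary vertices/edges/faces of the arrangement restricted to the union boundary), since the paper itself leaves it informal.
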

	
	In the setting of~\cite{BansalPruhs14} one obtains a geometric set-cover problem where the union complexity of objects is $O(n' \log P)$ because of different {\em priority levels}. In our setting, there are no priorities and thus our approximation ratio is better. To complete our rounding, we need the following theorem from
	Bansal and Pruhs \cite{BansalPruhs12}, which is an extension of results of Chan et al.\cite{CGKS12} and Varadarajan~\cite{Varadarajan10}.

	\begin{thm}
	\label{thm:rectanglecover}
	Let $I$ be an instance of a geometric weighted set-cover problem on $n$ points, such that the union complexity of every $n'$ sets is
	at most $n'h(n')$ for all $n'$. Then there is a polynomial-time $O(\log h(n))$ approximation
	for the problem. Furthermore, this approximation guarantee holds with respect to the
	optimum value of the fractional solution.
	\end{thm}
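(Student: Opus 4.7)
The plan is to prove Theorem~\ref{thm:rectanglecover} by LP rounding via weighted $\epsilon$-nets, which is the general framework of Br\"onnimann--Goodrich adapted to low-union-complexity range spaces by Varadarajan, and extended to arbitrary cost functions by Chan, Grant, K\"onemann, and Sharpe. The first step is to solve the standard set-cover LP in polynomial time, obtaining a fractional cover $x^*$ of value $\tau^*$; let $\mu$ be the probability distribution on sets defined by $\mu_S = x^*_S / \tau^*$. Feasibility of $x^*$ implies that every point is covered by sets of total $\mu$-mass at least $\epsilon := 1/\tau^*$, so any weighted $\epsilon$-net for this distribution is automatically a feasible integral cover.

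The geometric ingredient that powers the rounding is the Clarkson--Varadarajan--Aronov--Ezra--Sharir theorem relating union complexity to $\epsilon$-net size: if every sub-collection of $n'$ sets has union complexity at most $n' h(n')$, then a weighted $\epsilon$-net of size $O\bigl(\tfrac{1}{\epsilon}\log h(\tfrac{1}{\epsilon})\bigr)$ exists and can be constructed in polynomial time. Plugging in $\epsilon = 1/\tau^*$ (and noting that, after a standard discretization of weights, $1/\epsilon$ is polynomially bounded in $n$), the resulting net has total set-weight $O(\log h(n)) \cdot \tau^*$, which is the desired $O(\log h(n))$-approximation against the LP optimum.

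The main technical obstacle is constructing such a weighted $\epsilon$-net in polynomial time without losing the benefit of low union complexity. Naive sampling of each set $S$ with probability proportional to $x^*_S$, scaled up by a $\Theta(\log n)$ factor, would give only the classical $O(\log n)$ bound and ignore the geometric structure. The \emph{quasi-uniform sampling} technique of Varadarajan, generalized to arbitrary weights by Chan et al., circumvents this by partitioning the sets into $O(\log n)$ geometric weight classes and performing uniform random sampling within each class at a rate chosen so that the expected cost contributed by each class is $\Theta(\tau^*)$. Coverage is then analyzed via a Clarkson-style random-sampling argument that invokes the union-complexity bound on each class separately: each round of sampling reduces the number of uncovered points by a factor of $\Omega(1/h(n))$, so $O(\log h(n))$ rounds suffice to cover all points with high probability. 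Summing across weight classes yields total cost $O(\log h(n)) \cdot \tau^*$. Finally, one derandomizes the sampling through a pessimistic estimator and the method of conditional expectations, producing a deterministic polynomial-time algorithm whose output has cost at most $O(\log h(n))$ times the fractional optimum, as claimed.
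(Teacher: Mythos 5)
The paper does not prove Theorem~\ref{thm:rectanglecover} at all: it is imported as a black box from Bansal--Pruhs~\cite{BansalPruhs12}, which in turn extends Chan et al.~\cite{CGKS12} and Varadarajan~\cite{Varadarajan10}. Your proposal is essentially a reconstruction of the argument in those cited works (LP relaxation plus quasi-uniform sampling exploiting low union complexity), so in spirit you are following the same route the paper relies on, just spelled out rather than cited.

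Two points in your sketch would need care if this were written out in full. First, in the second paragraph you set $\epsilon := 1/\tau^*$ where $\tau^*$ is the \emph{value} (cost) of the fractional cover; the $\epsilon$-net reduction actually uses the fractional \emph{size} $\sum_S x^*_S$ (or works directly with $x^*$ without normalizing), and more importantly a small-size weighted $\epsilon$-net does not by itself bound the \emph{cost} of the chosen sets -- that is exactly what quasi-uniformity is needed for, as you correctly acknowledge in the third paragraph, so the second paragraph overstates what the net-size theorem alone buys you. Second, the claim that ``each round of sampling reduces the number of uncovered points by a factor of $\Omega(1/h(n))$, so $O(\log h(n))$ rounds suffice'' is not how the analysis of Varadarajan/Chan et al.\ goes (the $\log h$ factor appears as a boost in the per-set sampling probability, analyzed through a level-by-level decomposition, not as a round count), and the final derandomization step is not needed for the statement and is not an off-the-shelf conditional-expectations argument in those papers. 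None of this affects the paper, which uses the theorem only as a cited result, but a self-contained proof would have to get these details right.
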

	
	\begin{lemma}
		\label{lem:costint}
		We can efficiently find an integral solution $\tilde x$ to the weighted set-cover instance with cost at most $O(1)\sum_{j \in J_{k-1} \cup J_k} w_j p_j$.
	\end{lemma}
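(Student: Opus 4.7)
The plan is to obtain Lemma~\ref{lem:costint} as a direct combination of three already-assembled pieces: the feasibility of the fractional solution $x$ (Lemma~\ref{lem:fractional}), the cost bound on $x$ (the Observation preceding Lemma~\ref{lem:fractional}, which gives $\cost(x) = O(1)\sum_{j \in \Jbig} w_j p_j$), and the rounding guarantee in Lemma~\ref{frac2int}. Applying Lemma~\ref{frac2int} to $x$ produces in polynomial time an integral solution $\tilde x$ with $\cost(\tilde x) \le O(1)\cdot \cost(x)$. Chaining these and using $\Jbig \subseteq J_{k-1}\cup J_k$, we conclude $\cost(\tilde x) \le O(1)\sum_{j \in J_{k-1}\cup J_k} w_j p_j$, which is exactly the bound claimed.

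To justify the invocation of Lemma~\ref{frac2int}, I would push the weighted set-cover instance through the R2C reduction already described: each dangerous relevant interval $(t_1,t_2]$ becomes the point $(t_1,t_2)$, and each set $T_{j,\ell}$ becomes the Y-axis-abutting rectangle $R_{j,\ell} = (0,r_j]\times[\dtent_j,\dtent_j+2^\ell p_j)$ of weight $2^\ell w_j p_j$. Because our rectangles are exactly of the form $(0,X]\times[Y^1,Y^2)$, Lemma~\ref{lem:uc} yields union complexity $O(n')$ on any $n'$ of them, so $h(n')=O(1)$ in the notation of Theorem~\ref{thm:rectanglecover}. That theorem then delivers a polynomial-time algorithm producing an integral cover whose cost is within $O(\log h(n))=O(1)$ of the fractional optimum; feeding in our specific fractional solution $x$ (whose validity is guaranteed by Lemma~\ref{lem:fractional}) yields exactly the rounding bound $\cost(\tilde x) \le O(1)\cdot \cost(x)$ promised by Lemma~\ref{frac2int}.

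The main obstacle is not in the proof of Lemma~\ref{lem:costint} itself, which is a two-line consequence once Lemma~\ref{frac2int} and the Observation are in hand; it has already been absorbed into the geometry. Concretely, the serious work was (i) arranging the set family $\{T_{j,\ell}\}$ so that the reduction lands in the restricted R2C regime of Y-axis-abutting rectangles — this is why union complexity is only linear rather than $O(n'\log \hat P)$ as in the priority-layered construction of Bansal–Pruhs — and (ii) verifying that the explicit $x$ defined in Section~\ref{sec:fractional} is both feasible (Lemma~\ref{lem:fractional}) and cheap (the Observation). With those in place, Theorem~\ref{thm:rectanglecover} supplies the $O(1)$ integrality gap, and the final cost bound of $O(1)\sum_{j \in J_{k-1}\cup J_k} w_j p_j$ follows immediately.
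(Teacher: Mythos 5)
Your proposal is correct and follows essentially the same route as the paper: reduce to the R2C instance, invoke Lemma~\ref{lem:uc} to get linear union complexity, apply Theorem~\ref{thm:rectanglecover} (whose guarantee is relative to the fractional optimum, hence at most $\cost(x)$), and chain with the Observation bounding $\cost(x)$ by $O(1)\sum_{j \in \Jbig} w_j p_j \le O(1)\sum_{j \in J_{k-1}\cup J_k} w_j p_j$. No gaps; this matches the paper's argument.
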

	
	\begin{proof}
	From Lemma \ref{lem:uc}, the union complexity of any $k$ rectangles in the R2C instance is at most $O(k)$. We use Theorem \ref{thm:rectanglecover} to construct an integral solution $\tilde x$ for the R2C instance. From the guarantee of the theorem, the cost of this solution is at most $O(1)$ times the cost of $x$. But we know that the cost of $x$ is $O(1)\sum_{j \in J_{k-1} \cup J_k} w_j p_j$. This completes the proof.
	\end{proof}
	
	Thus, from now on we use $\tilde x$ to denote the integral solution to the weighted set-cover instance we constructed. The $\cost(\tilde x)$ is at most $O(1)\sum_{j \in J_{k-1} \cup J_k} w_j p_j$.

\subsection{Constructing $\bfS_k$}
\label{sec:wrapup}
	Finally, we show how a solution to the set-cover problem considered in the previous section can be used to construct the schedule ${\bfS}_k$.
	Given an integral solution $\tilde x$ to the set-cover instance defined above, we define extended and final deadlines of jobs as follows. For a job $j \in \Jbig$, let $\ell \in [0, L]$ be the largest integer such that $\tilde x_{j, \ell} = 1$ (this is well defined since we can assume $\tilde x_{j, 0} = 1$), and we define $\dext_j = \dtent_j + 2^\ell p_j$.
	For jobs $j \in J_{k-1} \cup J_k \setminus \Jbig$, we set $\dext_j =  \dtent_j$. From the definition of our weighted set-cover instance, the validity of $\tilde x$, and the definition of $\dext_j$, we can see that the original relevant intervals are safe w.r.t the extended deadlines. More specifically, we have
	\begin{obs}
		For every original relevant dangerous interval $I = (t_1, t_2]$, we have
		\begin{align*}
			p\big(\{j \in J_{k-1} \cup J_k: (r_j, \dext_j] \subseteq I\}\big) \leq \free(I).
		\end{align*}
	\end{obs}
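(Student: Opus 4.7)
The plan is to combine three ingredients: (i)~the fact that $\tilde x$ is a feasible cover of every dangerous interval; (ii)~the definition of $\dext_j$ as corresponding to the largest $\ell$ with $\tilde x_{j,\ell}=1$; and (iii)~the two a priori bounds we already established in Section~\ref{sec:reduction}, namely $p(J(I))\le t_2-t_1$ and that at most $Q$ slots of $I$ are occupied (so $\free(I)\ge (t_2-t_1)-Q$). Recall from $\Jbig$'s definition that every such job has length $p_j\ge Q$. The strategy is to show that at least one job of $J(I)\cap \Jbig$ is pushed completely out of $I$ by the cover, and that removing just this one $\Jbig$-job from $J(I)$ is already enough to bring the total processing time below $\free(I)$.

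First I would set $J_{\dext}(I):=\{j\in J_{k-1}\cup J_k:(r_j,\dext_j]\subseteq I\}$ and observe the monotonicity $J_{\dext}(I)\subseteq J(I)$, which follows immediately from $\dtent_j\le \dext_j$. Next, since $I$ is dangerous it is an item of the set-cover instance, so $\tilde x$ covers it: there exists $(j^\star,\ell^\star)$ with $\tilde x_{j^\star,\ell^\star}=1$ and $T_{j^\star,\ell^\star}$ covering $I$, which by the definition of $T_{j,\ell}$ means
\[
t_1\le r_{j^\star}<\dtent_{j^\star}\le t_2<\dtent_{j^\star}+2^{\ell^\star}p_{j^\star}.
\]
In particular $j^\star\in \Jbig\cap J(I)$. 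Letting $\ell'$ be the largest index with $\tilde x_{j^\star,\ell'}=1$, we have $\ell'\ge \ell^\star$, hence $\dext_{j^\star}=\dtent_{j^\star}+2^{\ell'}p_{j^\star}\ge \dtent_{j^\star}+2^{\ell^\star}p_{j^\star}>t_2$, so $j^\star\notin J_{\dext}(I)$.

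Combining these, $J_{\dext}(I)\subseteq J(I)\setminus\{j^\star\}$, hence
\[
p(J_{\dext}(I))\ \le\ p(J(I))-p_{j^\star}\ \le\ (t_2-t_1)-Q\ \le\ \free(I),
\]
using $p_{j^\star}\ge Q$ and the two bullet-point bounds from Section~\ref{sec:reduction}. This is exactly the claim. The argument is essentially a bookkeeping exercise; I do not anticipate any real obstacle, beyond being careful that the monotonicity $J_{\dext}(I)\subseteq J(I)$ lets us bound $p(J_{\dext}(I))$ by removing just the single pushed-out job $j^\star$, rather than having to argue about every job whose extended deadline now exceeds $t_2$.
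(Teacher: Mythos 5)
Your proof is correct and matches the paper's intended argument: the paper leaves the observation essentially unproved (``we can see that\ldots''), but the reasoning it sketches in the bullet points of Section~\ref{sec:reduction} --- a covering set $T_{j^\star,\ell^\star}$ pushes one job $j^\star \in \Jbig \cap J(I)$ with $p_{j^\star} \ge Q$ past $t_2$, which compensates for the at most $Q$ occupied slots since $p(J(I)) \le t_2 - t_1$ --- is exactly the chain $p(J_{\dext}(I)) \le p(J(I)) - p_{j^\star} \le (t_2-t_1) - Q \le \free(I)$ that you spell out. Your additional care with the monotonicity $J_{\dext}(I) \subseteq J(I)$ and with $\dext_{j^\star}$ being defined via the largest index $\ell'$ with $\tilde x_{j^\star,\ell'}=1$ (so $\dext_{j^\star} \ge \dtent_{j^\star}+2^{\ell^\star}p_{j^\star} > t_2$) is a faithful completion of the same argument.
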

	
	However, since we extend the deadlines of some jobs, new relevant extended intervals are created (that end in an extended deadline), which might not be safe. Our fix is to further extend the deadline of jobs in $\Jbig$ by $Q$. Namely, for every $j \in J_{k-1} \cup J_k \setminus \Jbig$, we define $\dfinal_j = \dext_j = \dtent_j$, and for every $j \in \Jbig$, we define $\dfinal_j = \dext_j + Q$. Now we need to show that the relevant intervals w.r.t final deadlines are also safe.
		\begin{lemma}
		\label{lem:feasibility}
			For every relevant final interval $I = (t_1, t_2]$ where $t_1$ is the release time of some job, and $t_2$ is the final deadline of some (possibly different) job, we have
			\begin{align}
				p\big(\{j \in J_{k-1} \cup J_k: (r_j, \dfinal_j] \subseteq I\}\big) \leq \free(I). \label{inequ:final-is-safe}
			\end{align}
		\end{lemma}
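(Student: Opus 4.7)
The plan is to case-split on whether $t_2$ coincides with a tentative deadline. If it does, then $I$ is itself an original relevant interval, and the claim follows immediately: combining the Observation (for dangerous $I$) with the defining condition of safety (for safe $I$) gives $p(\{j : (r_j, \dext_j] \subseteq I\}) \leq \free(I)$, and since $\dfinal_j \geq \dext_j$ for every $j \in J_{k-1} \cup J_k$, the set $A := \{j \in J_{k-1} \cup J_k : (r_j, \dfinal_j] \subseteq I\}$ is contained in $\{j : (r_j, \dext_j] \subseteq I\}$, giving the desired inequality.

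The harder case is when $t_2$ is not a tentative deadline. Because $\dfinal_j = \dtent_j$ for every $j \notin \Jbig$, this forces $t_2 = \dext_{j^*} + Q$ for some $j^* \in \Jbig$, and hence $\dtent_{j^*} \leq \dext_{j^*} = t_2 - Q < t_2$. I would introduce $\hat t$, defined as the largest tentative deadline satisfying $\hat t \leq t_2$ (well-defined since $\dtent_{j^*} \leq t_2$, and strictly less than $t_2$), and work with the original relevant interval $\hat I := (t_1, \hat t]$. Since no tentative deadline lies in $(\hat t, t_2]$ and every $j \in A$ satisfies $\dtent_j \leq \dfinal_j \leq t_2$, one obtains $\dtent_j \leq \hat t$ for all $j \in A$.

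I would then sub-case on where $\hat t$ sits relative to $t_2 - Q$. If $\hat t \geq t_2 - Q$, every $j \in A$ also satisfies $\dext_j \leq \hat t$: for $j \in \Jbig$, $\dext_j = \dfinal_j - Q \leq t_2 - Q \leq \hat t$, while for $j \notin \Jbig$, $\dext_j = \dtent_j \leq \hat t$. Hence $A \subseteq \{j : (r_j, \dext_j] \subseteq \hat I\}$, and the Observation/safety applied to $\hat I$ yields $p(A) \leq \free(\hat I) \leq \free(I)$. If instead $\hat t < t_2 - Q$, then for every $j \in A$ the inequalities $r_j \geq t_1$ and $C_j(S_k) \leq \dtent_j \leq \hat t$ force $S_k$ to process $j$ entirely inside $(t_1, \hat t]$; since $S_k$ is a valid schedule, $p(A) \leq \hat t - t_1 < t_2 - Q - t_1 \leq t_2 - Q_I - t_1 = \free(I)$, where the last step uses $Q_I \leq Q$.

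The main obstacle will be identifying the right auxiliary interval $\hat I$. Once $\hat t$ is defined, the $+Q$ buffer baked into the final deadlines is precisely what makes the sub-case dichotomy close: sub-case 2a transfers the Observation from the slightly smaller original relevant interval $\hat I$, whereas in sub-case 2b the more-than-$Q$ gap between $\hat t$ and $t_2$ by itself absorbs any occupation of $I$ by lower-class jobs, so the schedule $S_k$ alone already yields the needed bound.
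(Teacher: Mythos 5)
Your proof is correct and follows essentially the same route as the paper's: the paper likewise pivots on the largest tentative deadline $t_3 \le t_2$ (your $\hat t$), splits on whether $t_2 - t_3 \ge Q$ (your sub-case 2b versus Case 1/2a), and in the small-gap case uses the set-cover coverage of the dangerous interval $(t_1,t_3]$ to clear at least $Q$ units of demand, which is exactly what the Observation you invoke packages. The only nuance you skip is the degenerate situation $\hat t \le t_1$ (where $\hat I$ is not a relevant interval), but there the left-hand side of \eqref{inequ:final-is-safe} is vacuously zero, since any such $j$ would need $\dtent_j \le \hat t \le t_1 \le r_j$; the paper dispatches this explicitly.
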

		\begin{proof}
			Let $t_3 \le t_2$ be the largest integer that corresponds to a tentative deadline; we can assume $t_3 > t_1$ since otherwise the set in the summation on the left side of \eqref{inequ:final-is-safe} is empty. All jobs $j \in J_{k-1} \cup J_k$ with $(r_j, \dtent_j] \subseteq (t_1,t_2]$ has $(r_j, \dtent_j] \subseteq (t_1, t_3]$ by our definition of $t_3$. If $t_2 - t_3 \geq Q$, then we have
			\begin{align*}
				\free(I) &\geq t_2 - t_1 - Q \geq t_3 - t_1 \geq p\left(\{j \in J_{k-1} \cup J_k: (r_j, \dtent_j] \subseteq (t_1, t_3]\}\right)\\
				&= p\big(\{j \in J_{k-1} \cup J_k: (r_j, \dtent_j] \subseteq I\}\big) \geq p\big(\{j \in J_{k-1} \cup J_k: (r_j, \dfinal_j] \subseteq I\}\big).
			\end{align*}
			So we can assume $t_2 - t_3 < Q$.
			
			If the interval $(t_1,t_3]$ was originally safe then so is $(t_1,t_2]$. If $(t_1,t_3]$ was not safe, then some job $j \in \Jbig \cap J((t_1, t_3])$ has its extended deadline $\dext_j > t_3$, implying that $\dfinal_j = \dext_j+ Q \ge t_3 + Q > t_2$. Hence \eqref{inequ:final-is-safe} holds because the final deadline of $j$ lies beyond $t_2$ (thus clearing a demand of $p_j \ge Q$ unit slots from the interval $(t_1,t_2]$).
		\end{proof}

	\begin{lemma}
		\label{lem:matching}
		All jobs of $J_{k-1} \cup J_k$ can be scheduled by their final deadlines, without need to move any job from $J_1 \cup \cdots \cup J_{k-2}$ from its unit slots in the schedule $\bfS_{k-1}$.
	\end{lemma}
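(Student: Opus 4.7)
The plan is to reduce directly to Theorem~\ref{thm:edf} by \emph{compressing} the timeline to the free unit slots of $\bfS_{k-1}$. Re-index the free unit slots of $\bfS_{k-1}$ in increasing order as $1, 2, \ldots, N$, where $N$ is the total number of free slots. For each job $j \in J_{k-1} \cup J_k$, define a compressed release time $r'_j := \free((0, r_j])$ and a compressed deadline $d'_j := \free((0, \dfinal_j])$. Scheduling the jobs of $J_{k-1} \cup J_k$ into the free slots of $\bfS_{k-1}$ so as to meet every deadline $\dfinal_j$ is then equivalent, under this re-indexing, to scheduling those same jobs on a clean timeline of $N$ unit slots with release times $r'_j$ and deadlines $d'_j$.

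The main step would be to verify the hypothesis of Theorem~\ref{thm:edf} on this compressed instance. Take any two jobs $j, j' \in J_{k-1} \cup J_k$ and consider the compressed interval $(r'_j, d'_{j'}]$. By construction its length equals $\free((r_j, \dfinal_{j'}])$. Since $\free((0,t])$ is monotone in $t$, a job $j'' \in J_{k-1} \cup J_k$ satisfies $(r'_{j''}, d'_{j''}] \subseteq (r'_j, d'_{j'}]$ if and only if $(r_{j''}, \dfinal_{j''}] \subseteq (r_j, \dfinal_{j'}]$. Hence the EDF feasibility inequality for the compressed interval $(r'_j, d'_{j'}]$ reads
\[
p\big(\{j'' \in J_{k-1} \cup J_k : (r_{j''}, \dfinal_{j''}] \subseteq (r_j, \dfinal_{j'}]\}\big) \;\leq\; \free\big((r_j, \dfinal_{j'}]\big),
\]
which is precisely Lemma~\ref{lem:feasibility} applied to the relevant final interval $(r_j, \dfinal_{j'}]$. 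Invoking Theorem~\ref{thm:edf} then yields a schedule on the compressed timeline in which every $j$ completes by $d'_j$; translating back to the original timeline gives a schedule of $J_{k-1} \cup J_k$ using only the free slots of $\bfS_{k-1}$, in which every $j$ completes by $\dfinal_j$. Because only originally free slots are used, no job of $J_1 \cup \cdots \cup J_{k-2}$ is disturbed, as required.

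The hard part of the lemma has really already been absorbed into Lemma~\ref{lem:feasibility}; once that is available, the only thing that could cause trouble here is a mismatch between which intervals Theorem~\ref{thm:edf} requires one to check and which intervals Lemma~\ref{lem:feasibility} controls. The calculation above shows that these coincide: the interval endpoints that EDF cares about on the compressed timeline come from $\{r'_j\}$ and $\{d'_{j'}\}$, and their pre-images under the compression are exactly the release times and final deadlines of jobs in $J_{k-1} \cup J_k$—the objects over which Lemma~\ref{lem:feasibility} is quantified. No further work beyond this verification is anticipated.
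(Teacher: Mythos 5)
Your overall strategy---compress the timeline to the free slots of $\bfS_{k-1}$ and verify the EDF/Hall condition there using Lemma~\ref{lem:feasibility}---is in spirit the same reduction the paper uses (the paper phrases it as a bipartite matching between the $p_j$ units of each job and the free unit slots, with Hall's condition checked only on relevant final intervals). However, the pivotal verification step contains a genuine error: the claimed biconditional ``$(r'_{j''}, d'_{j''}] \subseteq (r'_j, d'_{j'}]$ if and only if $(r_{j''}, \dfinal_{j''}] \subseteq (r_j, \dfinal_{j'}]$'' is false in exactly the direction you need. The map $t \mapsto \free((0,t])$ is only weakly monotone: it is constant across stretches of slots occupied by $J_1 \cup \cdots \cup J_{k-2}$, and nothing prevents a release time or a final deadline of a job in $J_{k-1} \cup J_k$ from falling inside such a stretch. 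If $r_{j''} < r_j$ but every slot in $(r_{j''}, r_j]$ is occupied, then $r'_{j''} = r'_j$, so $j''$ is counted in the load of the compressed interval $(r'_j, d'_{j'}]$ even though $(r_{j''}, \dfinal_{j''}] \not\subseteq (r_j, \dfinal_{j'}]$; the symmetric situation can occur at the right endpoint. Consequently the EDF condition for $(r'_j, d'_{j'}]$ is in general strictly stronger than the displayed inequality, and invoking Lemma~\ref{lem:feasibility} for the single interval $(r_j, \dfinal_{j'}]$ does not verify it.

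The gap is repairable within your framework. For a compressed interval $(r'_j, d'_{j'}]$, let $A$ be the set of jobs $j''$ with $(r'_{j''}, d'_{j''}] \subseteq (r'_j, d'_{j'}]$, and (assuming $A \neq \emptyset$) set $\hat r := \min_{j'' \in A} r_{j''}$ and $\hat d := \max_{j'' \in A} \dfinal_{j''}$. Then $(\hat r, \hat d]$ is a relevant final interval containing $(r_{j''}, \dfinal_{j''}]$ for every $j'' \in A$, so Lemma~\ref{lem:feasibility} gives $p(A) \le \free\big((\hat r, \hat d]\big)$. Moreover, by the definition of $A$ we have $\free((0,\hat r]) \ge \free((0,r_j])$ and $\free((0,\hat d]) \le \free((0,\dfinal_{j'}])$, hence $\free\big((\hat r,\hat d]\big) = \free((0,\hat d]) - \free((0,\hat r]) \le \free((0,\dfinal_{j'}]) - \free((0,r_j]) = d'_{j'} - r'_j$, which is exactly the load bound Theorem~\ref{thm:edf} requires for $(r'_j, d'_{j'}]$. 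With this substitution your argument goes through; the paper's proof sidesteps the issue by never compressing, instead stating Hall's condition directly over free slots and relevant final intervals.
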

	
	\begin{proof}
		Consider a bipartite matching instance, where free unit slots correspond to the right hand side vertices, each job $j \in J_{k-1} \cup J_k$ corresponds to $p_j$ left hand side vertices, and these vertices can be matched to unit slots starting at $r_j$ and ending at the final deadline $\dfinal_j$. A feasible schedule exists iff all right hand side vertices can be matched. This requires Hall's condition to hold, and Hall's condition holds iff it holds on all relevant intervals (that end in final deadlines). The fact that all relevant intervals are safe implies that Hall's condition holds.
	\end{proof}
	
This completes the description of the schedule $\bfS_k$. Note that at this stage every job from the set $J_1, J_2,...J_k$
has a deadline $d_j$,  and from Lemmas \ref{lem:feasibility} and \ref{lem:matching}, it follows that the condition required in Theorem \ref{thm:edf} holds. Thus, $\bfS_k$ is feasible. It only remains to bound the cost of our final schedule.
Now we are ready to prove Theorem \ref{thm:main}.

	\begin{proof}[\textbf{Proof of Theorem~\ref{thm:main}}]
		Recall that for each job $j \in \Jbig$, we have $\dfinal_j = \dext_j +  Q$ and $p_j \geq Q$; for every job $j \in J_{k-1} \cup J_k \setminus \Jbig$ we have $\dfinal_j = \dtent_j$. Also, from the definition of $\dext_j$'s, we have $\sum_{j \in \Jbig}w_j (\dext_j - \dtent_j) \leq \cost(\tilde x)$.
		\begin{align*}
			&\quad \sum_{j \in J_k \cup J_{k-1}} w_j \cdot \max\big\{0, C_j(\bfS_k) - \dtent_j\big\}
			\quad\leq\quad \sum_{j \in J_k \cup J_{k-1}} w_j \cdot (\dfinal_j - \dtent_j) \\
			&= \sum_{j \in \Jbig} w_j \cdot (\dfinal_j - \dtent_j)
			\quad=\quad  \sum_{j \in \Jbig} w_j \cdot (\dext_j - \dtent_j + Q)
			\quad\leq\quad \sum_{j \in \Jbig} w_j \cdot (\dext_j + p_j - \dtent_j) \\
			 &= \sum_{j \in \Jbig} w_j \cdot (\dext_j - \dtent_j) + \sum_{j \in \Jbig} w_j \cdot p_j
			 \quad\leq\quad \cost(\tilde{x})+ \sum_{j \in \Jbig} w_j p_j
			 \quad\leq\quad O(1) \cdot \sum_{j \in J_{k-1} \cup J_k} w_j p_j.
		\end{align*}
		The last equality above follows from the proof the Lemma \ref{lem:costint}, which bounds the cost of set-cover solution found by our algorithm. This completes the proof.
	\end{proof}

\section{Quasi-PTAS for min-WPFT}
\label{sec:quasi-PTAS}

We now show that the framework described in the previous section can be used to get a QPTAS for min-WPFT when combined with the result of \cite{ChekuriKhanna02}.
The main idea is similar to that of Theorem~\ref{thm:spread} with the following difference. Recall that a schedule $S_k$ included two classes of jobs $J_{k-1}$ and $J_{k}$. Instead we have schedules $S_k$ for $b + 1 = \Theta(1/\epsilon)$ consecutive classes of jobs, starting at $J_{k-b}$ and ending at $J_k$. Recall that schedule $\bfS_k$ was derived from $\bfS_{k-1}$ and $S_k$. Instead we will derive it from $\bfS_{k-b}$ and $S_k$.
Note that the spread of jobs belonging to classes $\{k-b,..., k\}$ is at most $n^{O(b)}$, hence we can compute $(1+\epsilon)$ approximation to the instance in time $n^{O(\epsilon^{-3}\log^2 n^{O(b)})} = n^{O(\epsilon^{-5}\log^2 n)}$  using \cite{ChekuriKhanna02}.
Consequently, if $J_{K}$ is the highest class of jobs, we let $k$ range not only up to $K$, but rather up to $K + b - 1$. Then we choose the least costly of the final schedules $\bfS_{K}, \ldots, \bfS_{K + b - 1}$ as our schedule $\bfS$. The improvement in the approximation ratio compared to Theorem~\ref{thm:spread} stems from the fact that each class $J_k$ participates twice in only one of the $b$ final schedules, and once in each of the remaining $b-1$ final schedules. As a class contributes its $\opt_k$ and $\sum_{j \in J_k}w_jp_j$ values towards the overhead of $S$ compared to $\opt_k$ only if it participates twice,  on average over the $b$ schedules the additive contribution of class $J_k$ to the cost is $O(\opt_k/b) = O(\epsilon \cdot \opt_k)$.

Now we give more details. Let $\epsilon$ be the desired accuracy in the approximation factor.
Let $\epsilon' = (\epsilon -1/\sqrt n)/2$. Let $b = \frac{2 \gamma}{\epsilon'}$, where $\gamma$ is a large enough constant.  We focus on some $k > b$ and the construction of $\bfS_k$ from $\bfS_{k-b}$ and $S_k$.  Similar to the algorithm in Section~\ref{sec:ouralg}, we define $\dtent_j  = \max\{C_j(\bfS_{k-b}), C_j(S_k)\}$ for every $j \in J_{k-b}$, and define $\dtent_j = C_j(S_k)$ for every  $j \in J_{k-b + 1} \cup \cdots \cup J_k$.

To obtain the QPTAS, we need the following strengthening of Theorem \ref{thm:main}.

	\begin{thm} \label{thm:qptasmain}
		In polynomial time we can find a schedule $\bfS_k$ of $J_1 \cup J_2 \cup \cdots \cup J_k$ where the scheduling of jobs belonging to class $k-b-1$ or lower remains the same as in $\bfS_{k-b}$ and
		\begin{align*}
		\sum_{j \in J_{k-b} \cup \cdots \cup J_k} w_j \max\left\{0, C_j(\bfS_k) - \dtent_j\right\} \quad \leq \quad O(1) \cdot \sum_{j \in J_{k-b}} w_j p_j + O\Big(\frac{1}{\sqrt n}\Big) \cdot \sum_{j \in J_{k-b + 1} \cup \cdots \cup J_k} w_j p_j.
		\end{align*}
	\end{thm}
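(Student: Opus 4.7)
The proof follows the four-phase template of Theorem~\ref{thm:main}---reduction to a geometric set-cover instance, fractional solution, rounding, and schedule construction. For the reduction I would define $Q = p(J_1 \cup \cdots \cup J_{k-b-1}) \le n^{3k-3b-2}$ and $\Jbig = \{j \in J_{k-b} \cup \cdots \cup J_k : p_j \ge Q\}$; since jobs in $J_{k-b+1} \cup \cdots \cup J_k$ have $p_j \ge n^{3k-3b} \ge n^2 Q$, we have $J_{k-b+1} \cup \cdots \cup J_k \subseteq \Jbig$. Any dangerous relevant interval $I = (t_1, t_2]$ must contain a job of $J_{k-b+1} \cup \cdots \cup J_k$---otherwise $J(I) \subseteq J_{k-b}$ and all its jobs were already packed into $\free(I)$ by $\bfS_{k-b}$---so $t_2 - t_1 \ge n^{3k-3b}$ and the ratio $(t_2-t_1)/(nQ) \ge n$ carries over from the proof of Theorem~\ref{thm:main}. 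Sets $T_{j,\ell}$ with cost $2^\ell w_j p_j$ are defined as in Section~\ref{sec:reduction}, with $\ell \in [0, L]$ for $L = \Theta(b \log n)$ chosen so that $2^L p_j$ exceeds the total processing mass.

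The crux is an asymmetric fractional solution. For $j \in J_{k-b} \cap \Jbig$ I would keep the original values $x_{j,0} = 1$ and $x_{j,\ell} = 4/(2^\ell \log n)$ for $\ell \in [L]$, yielding per-job cost $O(w_j p_j)$. For big jobs $j \in J_{k-b+1} \cup \cdots \cup J_k$ I would set $x_{j,0} = 0$ and use much smaller values $x_{j,\ell} = \Theta(1/(2^\ell L \sqrt{n}))$, so that the per-job cost telescopes to $O(w_j p_j / \sqrt{n})$. Summed, the fractional cost then splits as $O(1)\sum_{j \in J_{k-b}} w_j p_j + O(1/\sqrt{n})\sum_{j \in J_{k-b+1} \cup \cdots \cup J_k} w_j p_j$, matching the form required by the theorem. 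Verifying that this scaled-down $x$ still fractionally covers every dangerous interval to extent $\ge 1$ is the main obstacle: I would case-split each dangerous interval $I$ according to whether the excess $p(J(I)) - \free(I)$ is dominated by $J_{k-b}$ jobs or by big jobs, arguing that in the first case the unscaled $J_{k-b} \cap \Jbig$ contributions alone give coverage $\ge 1$ via the harmonic-sum argument of Lemma~\ref{lem:fractional}, while in the second case the $\sqrt{n}$-factor slack in the harmonic sum $\frac{2}{\log n}\ln\!\bigl((t_2 - t_1)/(nQ)\bigr)$ absorbs the $1/\sqrt{n}$ scaling of the big-job $x$-values. Calibrating the threshold of the case split and tuning the constants so that both cases uniformly yield coverage $\ge 1$ is the central technical challenge.

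For rounding and schedule construction, the arguments of Sections~\ref{sec:rounding}--\ref{sec:wrapup} apply with small modifications. Lemma~\ref{lem:uc} and Theorem~\ref{thm:rectanglecover} still yield $\cost(\tilde x) = O(\cost(x))$ with the asymmetric split preserved up to constants. I would then define $\dext_j$ via the largest $\ell$ with $\tilde x_{j,\ell} = 1$, applying the $+Q$ padding $\dfinal_j = \dext_j + Q$ only to jobs actually extended (those with some $\tilde x_{j,\ell} = 1$) and leaving $\dfinal_j = \dtent_j$ for unextended big jobs, so that the extra $\sum w_j Q$ terms fit within the asymmetric budget (for padded jobs, $w_j Q \le w_j p_j$, and those are precisely the jobs whose fractional cost was paid for). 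Lemmas~\ref{lem:feasibility} and~\ref{lem:matching} then re-verify feasibility in this modified setup, because each dangerous interval is cleared by the specific genuinely-extended job $j$ whose $\dfinal_j = \dext_j + Q > t_2$. The final bound of Theorem~\ref{thm:qptasmain} follows by the same telescoping calculation as in the proof of Theorem~\ref{thm:main}, applied to the asymmetric cost of $\tilde x$.
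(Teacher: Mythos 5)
Your overall scaffolding (reduction with $Q=p(J_1\cup\cdots\cup J_{k-b-1})$, the observation that a dangerous interval must contain a job of class $k-b+1$ or higher, rounding via the R2C instance, the $+Q$ padding and the feasibility/matching arguments) matches the paper. But the central step --- the asymmetric fractional solution --- does not work as you propose, and the paper resolves it by a structurally different device. Keeping the sets $T_{j,\ell}$ of extension $2^\ell p_j$ and cost $2^\ell w_jp_j$ for the higher-class jobs while scaling their fractional values down to $x_{j,\ell}=\Theta\bigl(1/(2^\ell L\sqrt n)\bigr)$ cannot cover all dangerous intervals. Concretely, take a dangerous interval $(t_1,t_2]$ that in $S_k$ is essentially filled by a single class-$k$ job $j$, with $t_2$ a tentative deadline lying just beyond $\dtent_j=C_j(S_k)$ (slack less than $Q$), and with few or no class-$(k-b)$ jobs contained in it. The total fractional coverage this interval receives from $j$ under your values is $\sum_{\ell\ge\ell_{\min}}\Theta\bigl(1/(2^\ell L\sqrt n)\bigr)=O\bigl(1/(L\sqrt n)\bigr)$, and your "first case" cannot rescue it: the slots of $I$ on which $S_k$ runs the class-$k$ job are empty from the point of view of the $J_{k-b}\cap\Jbig$ harmonic sum, and there is no $nQ$-type bound on their number, so the class-$(k-b)$ contribution can be arbitrarily small (even zero). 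Your "second case" claim that the $\sqrt n$-factor slack in the harmonic sum absorbs the $1/\sqrt n$ scaling is quantitatively false: the harmonic sum is only $O\bigl(\log\frac{t_2-t_1}{nQ}\bigr)=O(b\log n)=O(L)$, so multiplied by weights of order $1/(2^\ell L\sqrt n)$ it yields $O(1/\sqrt n)$, not $1$. Moreover, no tuning of constants can fix this within your set system: for such an interval the LP coverage constraint forces mass $\approx 1$ on the ladder of job $j$ alone, whose cheapest set already costs $w_jp_j$, so the LP optimum itself is $\Omega(w_jp_j)$ --- incompatible with the $O(w_jp_j/\sqrt n)$ budget the theorem requires. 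The set family for high-class jobs must be changed, not just the fractional values.

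The paper's proof does exactly that. It restricts $\Jbig$ to $J_{k-b}$ and, for every job $j$ of class $k-b+1,\ldots,k$, \emph{deterministically} extends the deadline to $\dtent_j+p_j/\sqrt n$, i.e.\ introduces a single set $T_{j,0}$ of cost $w_jp_j/\sqrt n$ taken with $x_{j,0}=1$ (note $p_j/\sqrt n\ge nQ$, so these cheap short extensions are still long relative to $Q$). For class-$(k-b)$ jobs in $\Jbig$ it keeps the geometric ladder with $x_{j,0}=1$ and $x_{j,\ell}=\frac{8}{2^\ell\log n}$. Coverage is then argued via the last higher-class job $j^*$ completing in $I$ under $S_k$: either $t_2\le\dtent_{j^*}+p_{j^*}/\sqrt n$ and the deterministic extension covers $I$ outright (this is precisely the configuration that defeats your scheme), or the suffix $(\dtent_{j^*},t_2]$ has length at least $n^{3(k-b)}/\sqrt n\ge\sqrt n\cdot nQ$, only $J_{k-b}$ jobs run there in $S_k$, and the harmonic-sum argument (with the constant $8$ compensating for the ratio being only $\sqrt n$ rather than $n$) gives coverage $1$. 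Your remaining deviations (padding only genuinely extended jobs by $+Q$, taking $L=\Theta(b\log n)$) are minor and could be made to work, but the fractional-coverage step as you describe it has a genuine gap, and the intended proof requires redefining the sets for the higher classes rather than rescaling $x$.
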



Assuming the above statement, let us calculate the cost of schedule $\bfS_{k}$ for $k \in \{b, b+1,\cdots, K+b-1 \}$.
Note that for $k \leq b$, $\bfS_k$ can be directly computed using the algorithm of \cite{ChekuriKhanna02}.  For every $k > b$,
\begin{align}
	\wF(\bfS_k) &= \sum_{j \in J_1 \cup \cdots \cup J_{k-b-1}} w_j F_j(\bfS_{k-b}) + \sum_{j \in J_{k-b}\cup\cdots\cup J_k} w_j F_j(\bfS_k) \nonumber \\
	&=\sum_{j \in J_1 \cup \cdots \cup J_{k-b-1}} w_j F_j(\bfS_{k-b}) + \sum_{j \in J_{k-b} \cup \cdots \cup J_k} w_j \left(\dtent_j - r_j\right) + \sum_{j \in J_{k-b} \cup\cdots \cup J_k}w_j\left(C_j(\bfS_k) - \dtent_j \right) \nonumber \\
	&\leq \sum_{j \in J_1 \cup \cdots \cup J_{k-b-1}} w_j F_j(\bfS_{k-b}) + \sum_{j \in J_{k-b}} w_j \max\left\{F_j(\bfS_{k-b}), F_j(S_k)\right\} + O(1) \cdot \sum_{j \in J_{k-b} } w_j p_j \nonumber \\
	&\hspace{8mm} + \sum_{j \in J_{k-b+1} \cup \cdots J_{k}} w_j F_j(S_k) + O(1/\sqrt n)\cdot  \sum_{j \in J_{k-b+1} \cup \cdots \cup J_{k}} w_j p_j \quad\quad \left( \text{from Theorem } \ref{thm:qptasmain} \right )\nonumber\\
	&\leq  \sum_{j \in J_1 \cup \cdots \cup J_{k-b}} w_j F_j(\bfS_{k-b}) + \sum_{j \in J_{k-b} \cup \cdots \cup J_k} w_j F_j(S_k)  \nonumber \\
	& \hspace{0.2\textwidth} +  O(1)\cdot  \sum_{j \in J_{k-b}} w_jp_j + O(1/\sqrt n)\cdot  \sum_{j \in J_{k-b+1} \cup \cdots \cup J_{k}} w_j p_j \nonumber \\
	&= \wF(\bfS_{k-b}) + \wF(S_k) + O(1)\cdot \sum_{j \in J_{k-b}} w_j p_j + O(1/\sqrt n)\cdot  \sum_{j \in J_{k-b+1} \cup ...J_{k}} w_j p_j.\nonumber
\end{align}

Now using induction we can calculate the cost of schedule $\bfS_{z}$ for $z \in \{K, K+1,...K+b-1 \}$. Let $a := z \mod b$, and let $\hat{Z} : = \{x : x \in [z] \hspace{2mm} \text{and} \hspace{2mm} x \mod b = a\}$.

\begin{align}
\wF(\bfS_z) &= \wF(\bfS_{z-b}) + \wF(S_z) + O(1)\cdot \sum_{j \in J_{z-b}} w_j p_j + O(1/\sqrt n)\cdot  \sum_{j \in J_{z-b+1} \cup ...J_{z}} w_j p_j \nonumber \\
&= \sum_{z' \in \hat{Z}} \wF(S_z) + O(1)\cdot \sum_{j \in J_{z'}, z' \in \hat{Z}} w_j p_j + O(1/\sqrt n)\cdot  \sum_{j} w_j p_j \nonumber \\
&\leq (1+\frac{\epsilon'}{2}) \opt + \sum_{z'\in \hat{Z}} \opt_{z'} + O(1)\cdot \sum_{j \in J_{z'}, z' \in \hat{Z}} w_j p_j + O(1/\sqrt n)\cdot  \sum_{j} w_j p_j \quad \quad (\text{from} \hspace{2mm} \cite{ChekuriKhanna02})  \nonumber\\
&\leq (1+\frac{\epsilon'}{2}) \opt + O(1) \cdot \sum_{z' \in \hat{Z}} \opt_{z'} + O(1/\sqrt n)\cdot  \sum_{j} w_j p_j \label{inequ:reduction5}
\end{align}

Consider the second term in the above Equation \eqref{inequ:reduction5}. Each class $k$ contributes the second term exactly once in the schedules  $\bfS_{z}$ for $z \in \{K, K+1,\cdots, K+b-1 \}$. Therefore, the average cost of these schedules is at most
$$
(1+\frac{\epsilon'}{2})\opt + \frac{O(1)}{b} \opt + O(1/\sqrt n)\cdot  \sum_{j} w_j p_j  \leq (1+\epsilon)\opt,
$$
which follows from the choice of $b$ and $\epsilon'$.   Note that our final schedule is $\bfS := {\arg\!\min}_z \{\wF(\bfS_z)\}$, hence its cost is less than average cost of schedules $\bfS_{z}$ for $z \in \{K, K+1,...K+b-1 \}$. This completes the proof.

\medskip
It remains to prove Theorem \ref{thm:qptasmain}.
Fix some $k$ and consider the construction of schedule $\bfS_k$.
Define $L:= \lceil 7\log n \rceil$, and define $Q = p(J_1 \cup \cdots \cup J_{k-b-1})$ to be the number of occupied slots, in analogy to Section~\ref{sec:reduction}. Notice that $Q \leq n \times n^{3(k-b-1)} = n^{3(k-b)-2}$, which is smaller than the length of any job in class $k-b+1$ or above.   Define $\Jbig$ be the set of jobs in $J_{k-b}$ with length at least $Q$ (notice that this is slightly different from $\Jbig$ defined in the proof of Theorem~\ref{thm:main}, Section~\ref{sec:reduction}). We now describe sets for a set cover instance (that replaces the set cover instance that was used in the proof of Theorem~\ref{thm:main}).
\begin{itemize}
\item For job $j \in J_{k-b+1} \cup \cdots \cup J_k$, we extend the deadline to $\dtent_j + p_j/\sqrt n$ deterministically, and associate the set $T_{j, 0}$ with this extended deadline for job $j$. The cost of set $T_{j, 0}$ is $w_j p_j/\sqrt n$. We set $x_{j, 0} = 1$. Notice that for each such $j$, we have $p_j/\sqrt{n} \geq Q$.
\item For the jobs $j$ belonging to $\Jbig \subseteq J_{k-b}$, we create sets  $T_{j, \ell}$ for $\ell = 0,1,\cdots, L$, which correspond to having an extended deadline of $\dtent_j + 2^\ell p_j$.  The cost of $T_{j, \ell}$ is $2^\ell w_jp_j$. We set $x_{j, 0} = 1$ and $x_{j, \ell} = \frac{8}{2^\ell \log n}$ for every $\ell \in [L]$.
\end{itemize}
The definition of $x$ immediately implies that its cost is at most $O(1) \cdot \sum_{j \in J_{k-b}} w_j p_j + O(1/\sqrt n) \cdot \sum_{j \in J_{k-b + 1} \cup \cdots \cup J_k} w_j p_j$.

\medskip
We now need to argue that the fractional solution is feasible. The proof follows the same structure as that of Theorem  \ref{thm:main}. The main observation is that every job belonging to the class $k-b+1$  or higher is $\Omega(n^2)$ times larger than $Q$, hence extending their deadlines by $p_j/\sqrt n$ is sufficient. For completeness, we repeat all the steps.

\begin{lemma}
		\label{lem:fractional2}
		The $x$ constructed above covers all the items (dangerous relevant intervals) to an extent of at least  1.
\end{lemma}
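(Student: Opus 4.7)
The plan is to follow the structure of Lemma~\ref{lem:fractional} with one crucial new ingredient to handle the fact that jobs in $J_{k-b+1}\cup\cdots\cup J_k$ are no longer in $\Jbig$ and so provide no higher-level fractional coverage. Fix a dangerous relevant interval $I=(t_1,t_2]$. I would first argue that $I$ must contain a ``large'' job from some class above $J_{k-b}$: every job of $J_{k-b}$ contained in $I$ is scheduled by $\bfS_{k-b}$ inside the free slots of $I$ (since $\dtent_j\geq C_j(\bfS_{k-b})$ for such $j$), so $p(J_{k-b}\cap J(I))\leq\free(I)$, and $I$ being dangerous forces some $j\in J(I)\cap(J_{k-b+1}\cup\cdots\cup J_k)$. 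Because any such job has $p_j\geq n^{3(k-b)}\geq n^2Q$, we obtain $t_2-t_1\geq n^2Q$, which supplies the slack needed later.

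Next I would discharge the easy cases in which a set with $x$-value~$1$ already covers $I$: if some $j\in\Jbig\cap J(I)$ has $\dtent_j+p_j>t_2$, or some large $j\in J(I)\cap(J_{k-b+1}\cup\cdots\cup J_k)$ has $\dtent_j+p_j/\sqrt n>t_2$, then $T_{j,0}$ covers $I$ and we are done. In the remaining case, every $j\in\Jbig\cap J(I)$ satisfies $\dtent_j+p_j\leq t_2$ and every large $j\in J(I)\cap(J_{k-b+1}\cup\cdots\cup J_k)$ satisfies $t_2-\dtent_j\geq p_j/\sqrt n\geq n^{3/2}Q$. For each $j\in\Jbig\cap J(I)$ whose minimum level $\ell\geq 1$ with $\dtent_j+2^\ell p_j>t_2$ lies in $[L]$, the pointwise argument from the proof of Lemma~\ref{lem:fractional} carries over and yields
\begin{equation*}
x_{j,\ell}=\frac{8}{2^\ell\log n}=\frac{4}{\log n}\cdot p_j\cdot\frac{1}{2^{\ell-1}p_j}\geq\frac{4}{\log n}\sum_{t\,:\,j\text{ is processed at }(t-1,t]\text{ in }S_k}\frac{1}{t_2-t},
\end{equation*}
using that every such slot $t$ satisfies $t_2-t\geq t_2-\dtent_j\geq 2^{\ell-1}p_j$.

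The main obstacle is that large jobs from $J_{k-b+1}\cup\cdots\cup J_k$ may occupy arbitrarily many slots of $I$ in $S_k$ while providing nothing to the fractional covering, so the global $O(nQ)$ bad-slot bound that drove the harmonic sum in Lemma~\ref{lem:fractional} is no longer available. To restore it, I would localize the estimate to the suffix $I^*:=(t_2-n^{3/2}Q,t_2]\subseteq I$. Two observations make this work. First, by the case assumption, every large $j\in J(I)\cap(J_{k-b+1}\cup\cdots\cup J_k)$ has $C_j(S_k)\leq\dtent_j\leq t_2-n^{3/2}Q$, so no slot of $I^*$ runs such a job in $S_k$. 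Second, a $\Jbig$ job whose minimum covering level exceeds $L$ must satisfy $\dtent_j\leq t_2-2^Lp_j\leq t_2-n^7Q$, so it too contributes no slot to $I^*$; hence every $\Jbig\cap J(I)$ job occupying a slot of $I^*$ is one for which the fractional bound above actually applies. The slots of $I^*$ not in the ``good'' set $G$ (slots running some $\Jbig\cap J(I)$ job in $S_k$) consist of (i)~slots not used by any $J(I)$ job in $S_k$, of which there are fewer than $Q$ across all of $I$ because $p(J(I))>\free(I)\geq t_2-t_1-Q$, and (ii)~slots used by a job of $J_{k-b}\setminus\Jbig$, totalling at most $nQ$ because there are at most $n$ such jobs each of length less than $Q$. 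Thus $|I^*\setminus G|\leq(n+1)Q$, and a worst-case placement of the bad slots closest to $t_2$ gives
\begin{equation*}
\sum_{t\in G\cap I^*}\frac{1}{t_2-t}\geq\sum_{t'=(n+1)Q}^{n^{3/2}Q-1}\frac{1}{t'}\geq\ln\frac{n^{3/2}}{n+1}\geq\frac{\ln n}{2}-O(1),
\end{equation*}
so the total fractional coverage of $I$ is at least $\frac{4}{\log n}\bigl(\tfrac{\ln n}{2}-O(1)\bigr)=2\ln 2-o(1)\geq 1$ for all sufficiently large $n$ (small $n$ is absorbed by the choice of constants in $b$ and $L$), completing the argument.
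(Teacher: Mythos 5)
Your proof is correct and takes essentially the same route as the paper's: after discharging the $T_{j,0}$ cases, you restrict attention to a suffix of the dangerous interval of length roughly $n^{3/2}Q$ in which no job of classes $k-b+1,\dots,k$ is processed, bound the non-contributing slots there by $O(nQ)$, and conclude with the same harmonic-sum estimate yielding coverage about $2\ln 2\ge 1$. The only cosmetic differences are that you anchor the suffix as $(t_2-n^{3/2}Q,\,t_2]$ rather than as $(\dtent_{j^*},t_2]$ for the last-completing high-class job $j^*$, and you dispose of levels $\ell>L$ by noting such jobs cannot occupy the suffix instead of arguing $\ell\le L$ directly; both variants work (your final constant is only $o(1)$ weaker than the paper's, which matters only for very small $n$).
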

	
\begin{proof}
		Consider a dangerous interval $I = (t_1,t_2]$. 
		We have that $t_2 - t_1 \geq n^{3(k-b)}$, because to be dangerous $(t_1, t_2]$ must contain a job from class $k-b+1$ or higher.
		Consider the schedule $S_{k}$, and let $j^*$ be the last job belonging to classes $k-b+1,\cdots, k$ that completes in the interval $I$.
		Then, $t_2 > \dtent_{j^*} + p_{j^*}/\sqrt n$, since otherwise $I$ is covered to an extent of 1, as we picked $T_{j^*,0}$ to an extent of 1 in the fractional solution. 
		This implies that the interval $(\dtent_{j^*}, t_2]$ is at least of length $n^{3(k-b)}/\sqrt n$.
		
		Now we focus on the interval $I':= (\dtent_{j^*}, t_2]$.  In the schedule $S_k$,  only jobs from the set $J_{k-b}$ are processed in the interval $I'$.  Let $J'$ be the set of jobs in $\Jbig$ that in $S_k$ complete in $I'$.
		We say a time slot $(t-1, t] \in I'$ is empty, if $S_k$ is not processing a job in $J'$ during $(t-1, t]$.
		$S_k$ contains less than $Q$ idle slots in $I'$, since otherwise $I$ would not be dangerous.
		Further, the total length of jobs in $J_{k-b} \setminus \Jbig$ is at most $(n-1)Q$.
		Thus, there are at most $Q + (n-1)Q = nQ$ empty slots in $I'$ .
		
		We can assume that every job $j \in J'$ has $\dtent_j + p_j \leq t_2$, since otherwise $I$ is covered by $T_{j, 0}$ to an extent of $x_{j, 0} = 1$. Now, focus on each $j \in J'$. The contribution of $j$ towards the fractional set-cover is at least $\frac{8}{2^\ell \log n}$ where $\ell \geq 1$ is the minimum integer such that $\dtent_j + 2^\ell p_j > t_2$.   We have $\ell \in [L]$ since $j$ completes in $I'$ and $\dtent_j + 2^L p_j > t_2$.
		This implies that $C_j(S_{k}) + 2^{\ell-1}p_j \leq \dtent_j + 2^{\ell - 1} p_j \leq t_2$, by our choice of $\ell$ (recall that $\dtent_j + 2^0p_j \leq t_2$).  So, the contribution of $j$ is at least
		\begin{align*}
		\frac{8}{2^\ell \log n} =  \frac{4}{\log n} \cdot p_j \cdot \frac{1}{2^{\ell-1}p_j}  \geq \frac{4}{\log n} \sum_{t \in I': j\text{ processed in }(t-1, t] \text{ in } S_{k}} \frac{1}{t_2 - t}.
		\end{align*}
		The last inequality used that $t_2 - t \geq t_2 - C_j(S_k)\geq 2^{\ell - 1}p_j$ for every $t$ contributing to the sum.
		So, the total contribution of all jobs $j \in J'$ is at least
		\begin{align*}
		\frac{4}{\log n} \sum_{t \in (\dtent_{j^*}, t_2]: \text{some job in $J'$ is processed in }(t-1, t] \text{ by } S_k} \frac{1}{t_2 - t}.
		\end{align*}
		
		Since there are at most $nQ$ empty slots $(\dtent_{j^*}, t_2]$, at most $nQ$ integers $t \in (\dtent_{j^*}, t_2]$ are not contributing to the sum. So, the above quantity is at least
		\begin{flalign*}
		&& \frac{4}{\log n} \sum_{t = \dtent_{j^*}+1}^{t_2 - nQ} \frac{1}{t_2 - t} = \frac{4}{\log n}\sum_{t' = nQ}^{t_2 - \dtent_{j^*} - 1} \frac1{t'} \geq \frac{4}{\log n} \cdot \ln \left(\frac{t_2 - \dtent_{j^*}}{nQ} \right) \geq \frac{4}{\log n} \ln n \geq 1. &&
		\end{flalign*}
		The second-to-last inequality used that $Q \leq n^{3k-3b-2}$ and $t_2 - \dtent_{j^*} \geq n^{(3k-3b-1/2)}$.
\end{proof}

We round the fractional solution into an integral solution using the same algorithm as described in Section \ref{sec:rounding}.
Recall that the final deadlines of jobs are defined $\dfinal_j = \dext_j + Q$ for every $j$ with size at least $Q$; for other jobs, we have $\dfinal_j = \dext_j$. Note that for the jobs belonging to classes $k' \in \{k-b+1, \cdots, k\}$, $p_j \geq n \cdot Q$ and hence increase in the cost due the final deadline is at most $\frac1{\sqrt{n}} w_j p_j$ for each such job. Further, for jobs belonging to class $k-b$, we extend their deadlines only if $p_j \geq Q$. Hence,
The above lemma completes the proof of Theorem \ref{thm:qptasmain}, which in combination with \cite{ChekuriKhanna02} implies the following.

\begin{thm}
	\label{ptas}
	For every $\epsilon \in [0,1/2)$, min-WPFT (even with exponential processing times and job weights) can be approximated within a ratio of $(1 + \epsilon)$ in time $n^{O(\epsilon^{-5}\log^2 n)}$.
\end{thm}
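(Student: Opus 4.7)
The plan is to collect the machinery already built up in Section~\ref{sec:quasi-PTAS} and apply it with the Chekuri--Khanna~\cite{ChekuriKhanna02} algorithm as the subroutine that produces the per-window schedules. I would fix $\epsilon' = (\epsilon - 1/\sqrt n)/2$ and $b = 2\gamma/\epsilon'$ as prescribed there (taking $n$ large enough that $\epsilon' = \Theta(\epsilon)$; the small-$n$ case can be handled by brute force). For every $k \in \{b, b+1, \ldots, K+b-1\}$, the window $J_{k-b} \cup \cdots \cup J_k$ has spread at most $n^{3(b+1)} = n^{O(1/\epsilon)}$, so running Chekuri--Khanna on it with accuracy $\epsilon'/2$ produces a schedule $S_k$ of cost at most $(1 + \epsilon'/2)(\opt_{k-b} + \cdots + \opt_k)$. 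For $k \le b$ I would directly compute $\bfS_k$ using~\cite{ChekuriKhanna02} on the entire prefix $J_1 \cup \cdots \cup J_k$, whose spread is $n^{O(b)}$ as well.

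Then I would assemble the final schedules inductively by invoking Theorem~\ref{thm:qptasmain} exactly as in the displayed chain leading to~\eqref{inequ:reduction5}, obtaining for each $z \in \{K, K+1, \ldots, K+b-1\}$ the bound
\[
\wF(\bfS_z) \;\le\; (1 + \tfrac{\epsilon'}{2})\,\opt \;+\; O(1)\!\!\sum_{z' \in \hat Z}\!\opt_{z'} \;+\; O(1/\sqrt n)\sum_{j \in J} w_j p_j,
\]
where $\hat Z$ depends on $z \bmod b$ and every class $J_k$ belongs to exactly one of the $b$ residue sets. Outputting $\bfS := \arg\min_z \wF(\bfS_z)$ guarantees a value no more than the average over $z$, in which the middle term amortizes to $(O(1)/b)\opt \le (\epsilon'/2)\opt$ by our choice of $b$. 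For the additive slack, I would use the trivial per-job lower bound $w_jF_j \ge w_jp_j$ to deduce $\sum_j w_j p_j \le \opt$, so the $O(1/\sqrt n)\sum_j w_j p_j$ term is at most $O(1/\sqrt n)\opt$, which together with the other two contributions fits inside $(1+\epsilon)\opt$ thanks to the calibration of $\epsilon' = (\epsilon - 1/\sqrt n)/2$.

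The running time is a straightforward accounting: there are $O(n)$ invocations of Chekuri--Khanna, each on an instance of spread $n^{O(b)} = n^{O(1/\epsilon)}$ with accuracy $\epsilon'/2 = \Theta(\epsilon)$, costing
\[
n^{O(\epsilon^{-3}\log^2 n^{O(b)})} \;=\; n^{O(\epsilon^{-3}\cdot b^2\cdot \log^2 n)} \;=\; n^{O(\epsilon^{-5}\log^2 n)}
\]
per call; the remaining work (constructing the set-cover instances, running the union-complexity rounding of Lemma~\ref{lem:costint}, and stitching the schedules via EDF) is polynomial in $n$. The piece I expect to need the most care is the bookkeeping in the averaging step, specifically verifying that each class $J_k$ contributes its $O(1)\sum_{j \in J_k} w_j p_j$ ``outer'' penalty to exactly one of the $b$ final schedules and its $O(1/\sqrt n)\sum_{j \in J_k} w_j p_j$ ``inner'' penalty to every one of them; once this is in hand, the $(1+\epsilon)$-factor falls out of the choice of $b$ and $\epsilon'$, and Theorem~\ref{ptas} follows immediately.
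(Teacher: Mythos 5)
Your proposal is correct and follows essentially the same route as the paper: windows of $b+1=\Theta(1/\epsilon)$ classes solved by Chekuri--Khanna, assembly via Theorem~\ref{thm:qptasmain}, and the averaging argument over the $b$ final schedules $\bfS_K,\ldots,\bfS_{K+b-1}$ with the same calibration of $b$ and $\epsilon'$ (the paper leaves implicit the bound $\sum_j w_jp_j\le\opt$ and the large-$n$ caveat that you make explicit). No substantive differences to report.
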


The proof of the first item in Theorem \ref{thm:final} follows from using  the algorithm  in \cite{Batra18} with Theorem \ref{thm:spread}, and the second item of Theorem \ref{thm:final} is a restatement of Theorem \ref{ptas}.


\end{document}